\newtheorem{definition}{Definition}
\newtheorem{lemma}{Lemma}
\newtheorem{theorem}{Theorem}
\title{DuoLoRA : Cycle-consistent and Rank-disentangled Content-Style Personalization}
\author{Aniket Roy$^{1,2}$\thanks{Work done as part of a summer internship.}, Shubhankar Borse$^{2}$, Shreya Kadambi$^{2}$, Debasmit Das$^{2}$, Shweta Mahajan$^{2}$, \\
Risheek Garrepalli$^{2}$, Hyojin Park$^{2}$, Ankita Nayak$^{2}$, Rama Chellappa$^{1}$, Munawar Hayat$^{2}$, Fatih Porikli$^{2}$ \\
{$^{1}$Johns Hopkins University, $^{2}$Qualcomm AI Research}\thanks{Qualcomm AI Research is an initiative of Qualcomm Technologies,
Inc.}}
\begin{document}

\maketitle
\begin{abstract}
We tackle the challenge of jointly personalizing content and style from a few examples. A promising approach is to train separate Low-Rank Adapters (LoRA) and merge them effectively, preserving both content and style. Existing methods, such as ZipLoRA, treat content and style as independent entities, merging them by learning masks in LoRA's output dimensions. However, content and style are intertwined, not independent. To address this, we propose DuoLoRA—a content-style personalization framework featuring three key components: (1) rank-dimension mask learning, (2) effective merging via layer priors, and (3) Constyle loss, which leverages cycle-consistency in the merging process.
First, we introduce ZipRank, which performs content-style merging within the rank dimension, offering adaptive rank flexibility and significantly reducing the number of learnable parameters. Additionally, we incorporate SDXL layer priors to apply implicit rank constraints informed by each layer’s content-style bias and adaptive merger initialization, enhancing the integration of content and style. To further refine the merging process, we introduce Constyle loss, which leverages the cycle-consistency between content and style.
Our experimental results demonstrate that DuoLoRA outperforms state-of-the-art content-style merging methods across multiple benchmarks.
\end{abstract} 
\vspace{-0.4cm}
\section{Introduction}
\label{sec:intro}
\begin{figure}
    \centering
        \centering
        \scalebox{0.43}{
        \includegraphics[width=1.0\textwidth]{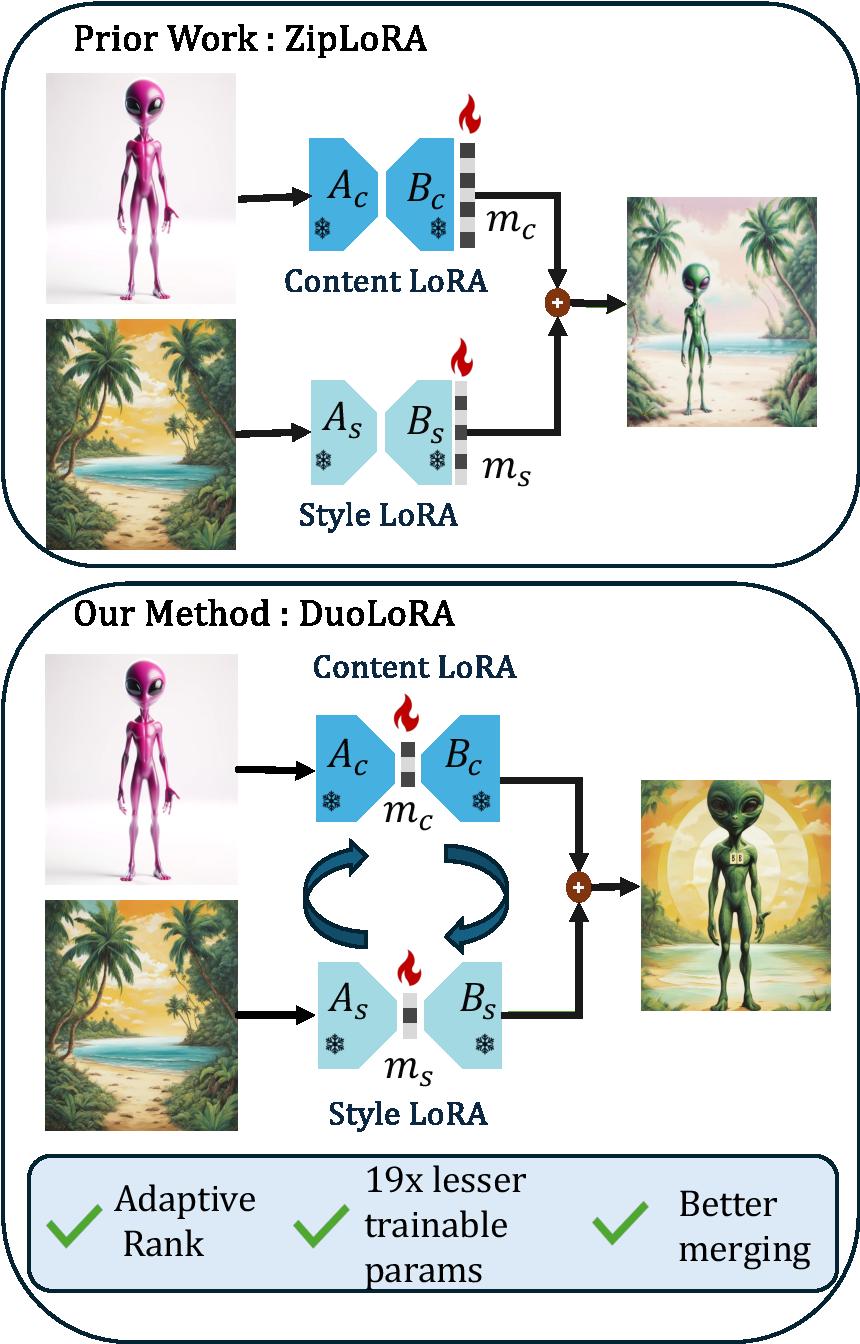}}
        \vspace{-0.2cm}
        \caption{Content and style personalization using DuoLoRA provides (1) adaptive rank flexibility, (2) significantly lesser trainable parameters, and (3) better content-style merging.}
        \label{fig:teaser}  
        \vspace{-0.6cm}
 \end{figure}

Personalizing text-to-image diffusion models~\cite{ruiz2023dreambooth,sohn2023styledrop,chang2023muse} has recently attracted significant interest due to their broad range of applications. 
These models can produce images with desired content and style, overcoming the limitations of earlier approaches that required extensive training data for neural style transfer~\cite{gatys2016image}. 
They can generate specific images from just one or two reference images, but blending both content and style remains challenging.

Recently, parameter-efficient fine-tuning (PEFT) methods like Low-Rank Adapters (LoRA)~\cite{hu2021lora} have become popular. 
These methods can capture unique characteristics with a small amount of data. 
Consequently, content-style personalization can be reframed as a LoRA merging task, where individually trained LoRAs are combined to achieve efficient content-style blending. 
For example, ZipLoRA~\cite{shah2025ziplora} merges content and style by learning masks over the LoRA output dimensions, assuming that content and style are independent concepts. 
This approach provides adequate control over the diffusion model. 
However, the requirement for fine-tuning during inference is a drawback. Additionally, using the same rank for both content and style LoRAs may not be optimal, as content and style may have different representational requirements depending on each layer.


In this work, we ask: How can we enable adaptive rank flexibility to reduce fine-tuning costs while increasing the separation between content and style distributions?
To achieve adaptive rank flexibility across the layers of the diffusion UNet, we analyze the SDXL model architecture~\cite{basu2024mechanistic, frenkel2024implicit}. We observe that UNet layers, with smaller resolutions, primarily influence content generation, suggesting that content merges should have higher rank in these layers than style merges. Conversely, the layers, with larger resolutions, are more crucial for style generation, and the rank constraint should be adjusted accordingly~\cite{basu2024mechanistic, frenkel2024implicit}.

Within this context, we introduce DuoLoRA, a framework for effective content-style merging, composed of three components: (1) ZipRank, which learns masks within the rank dimension, (2) improved merging via layer priors, and (3) Constyle loss, which leverages cycle-consistency across content and style.
In ZipRank, we propose learning masks in the rank dimension rather than the output dimension, allowing for adaptive rank adjustment and a substantial reduction in learnable parameters. To facilitate layer-wise rank-adaptive merging, we apply rank constraints during content-style merging, informed by our observations as prior knowledge. This rank constraint is formulated as a nuclear norm minimization problem under a $l1$ sparsity constraint, improving content-style blending with fewer parameters.
To further disentangle content and style, we introduce Constyle loss, which leverages cycle-consistency across content and style. This approach, inspired by CycleGAN’s~\cite{zhu2017unpaired} treatment of content and style as separate domains optimized through domain translation with minimized reconstruction loss, enables balanced content-style merging. Unlike ZipLoRA, which treats content and style independently, our cycle-consistency loss accounts for their interdependent nature, resulting in improved content-style blending.

In summary, the contributions of this paper are as follows:

\begin{itemize} 
    \item We address content-style personalization as a LoRA merging problem, where we propose learning masks in the rank dimension instead of the output dimension, allowing for adaptive rank flexibility with significantly fewer learnable parameters. 
    \item We analyze the SDXL architecture’s layer prior information, finding that layers with lower resolutions primarily contribute to content generation, while layers with higher resolutions focus on style generation. Based on these insights, we introduce explicit rank constraints through nuclear norm minimization under a sparsity constraint to improve merging. 
    \item We introduce Constyle loss, leveraging the cycle-consistency across content and style, and we validate this approach across various benchmarks. 
\end{itemize} 
\section{Related work}
\label{sec:related_work}

\textbf{Personalization.}
Personalizing text-to-image diffusion models has recently attracted significant attention. Early approaches, such as Textual Inversion~\cite{gal2022image}, focus on learning a text token that represents a particular concept, while Dreambooth~\cite{ruiz2023dreambooth} updates network parameters for personalization. Custom Diffusion~\cite{kumari2023multi} makes the process more efficient by fine-tuning only the cross-attention modules. However, these methods typically handle only a single concept or object. In contrast, some approaches aim to personalize both content and style. StyleDrop~\cite{sohn2023styledrop} uses the Muse transformer network to align the style of generated images with a reference image. Other recent style learning methods are - Rb-modulation~\cite{rout2024rb}, Instantstyle~\cite{wang2024instantstyle}, IP-adapter~\cite{ye2023ip}, Magic-insert~\cite{ruiz2024magic}, StyleAlign~\cite{hertz2024style}, LoRA-composer~\cite{yang2024lora}, Paircustomization~\cite{jones2024customizing},~\cite{shenaj2024lora,xu2024freetuner,xu2024break,zhang2024finestyle,ouyang2025k,hu2021lora,choi2024style,bi2024customttt,avrahami2024diffuhaul,xie2024styletex,cohen2024conditional,stoica2023zipit,wang2024instantstyle,xing2024csgo,jiang2024artist,ohm2025fruit,kompanowski2024dream,agarwal2024training,ge2024tuning,hu2024vectorpainter,song2024style3d,deng2024magicstyle,sunsast,borse2025subzero,somepalli2024measuring,liu2024unziplora}. 

\noindent\textbf{LoRA merging.}
LoRA~\cite{hu2021lora} has proven particularly effective for learning from small datasets. Individual LoRA models are trained for each concept or style, and these concepts can then be combined through LoRA merging. Methods like Concept-Sliders~\cite{gandikota2023concept} and ControlNet~\cite{zhang2023adding} also utilize LoRA merging, though their approaches are primarily suited for text-based editing. Recently, Shah et al.~\cite{shah2025ziplora} introduced a method for LoRA merging by learning orthogonal masks in the output space of the layer. 

\noindent\textbf{Layer priors.}
We analyze the SDXL architecture and uncover some intriguing insights. Specifically, we observe that the earlier layers in the diffusion UNet contribute more to local edits, such as content generation, while the later layers influence more global edits. Other baseline approaches include naive parameter merging~\cite{shah2025ziplora} and B-LoRA~\cite{frenkel2024implicit}. Recent studies have also examined how information is encoded within the SDXL architecture, demonstrating its utility for fast image editing~\cite{basu2024mechanistic}. These findings inspire us to investigate layer prior information-based merging in this work.
\section{Method : DuoLoRA}
\label{sec:method}
\subsection{Problem statement}

\begin{figure*}
    \centering
        \centering
        \includegraphics[width=0.98\textwidth]{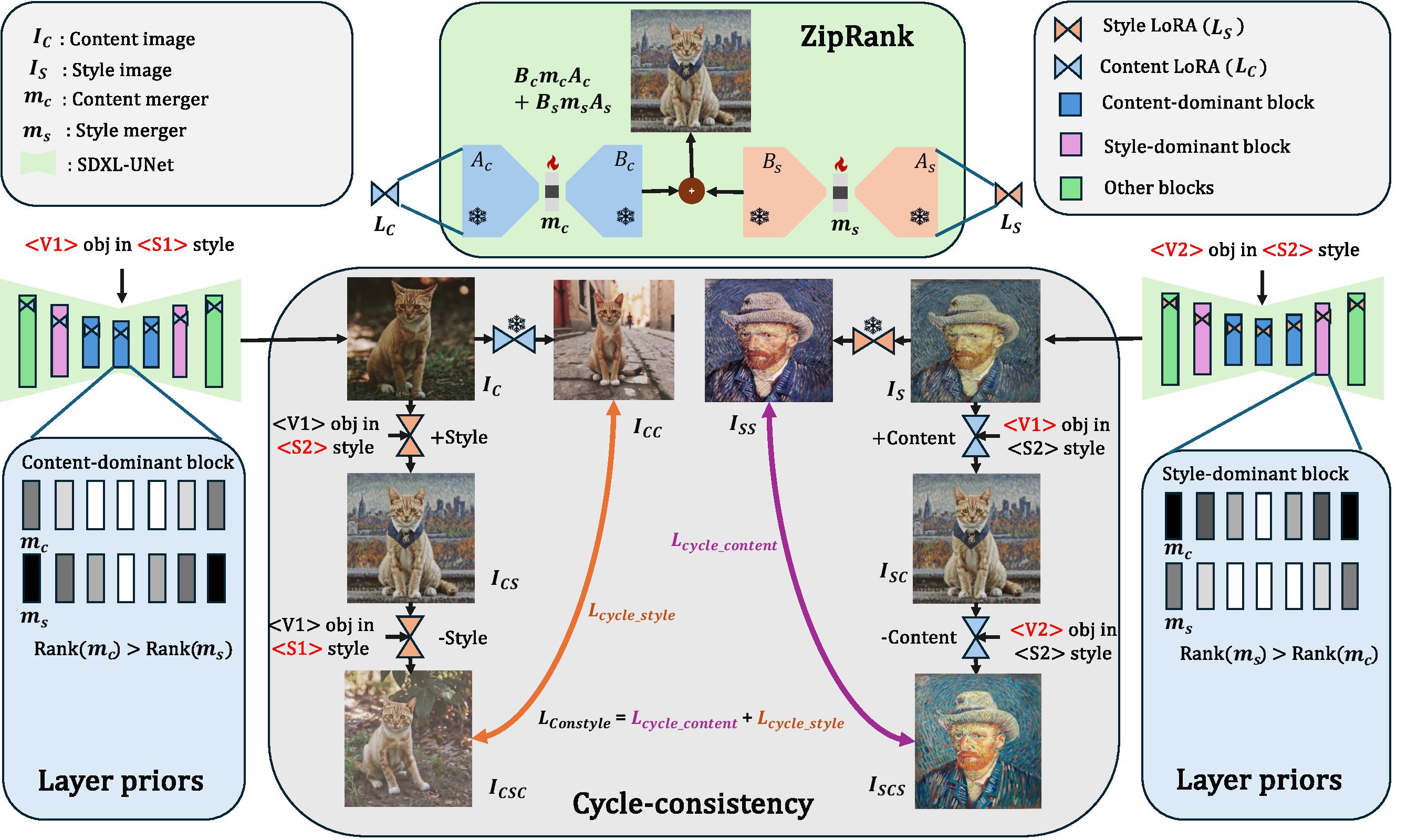} 
        \caption{Overview of DuoLoRA. It consists of three components - (1) ZipRank: learning the mask in rank dimension, (2) layer-prior based merging identifying content-dominant and style-dominant blocks of SDXL UNet, (3) cycle-consistency based merging using Constyle loss.}
        \vspace{-0.4cm}
        \label{fig:method}   
 \end{figure*}

We address the problem of customizing both content and style from few examples and generate variations conditioned on text prompt. We investigate the problem through the lens of merging individual concepts, where such individual concepts are learned using LoRAs. Therefore we cast the concept-style personalization as a LoRA merging problem (as shown in Fig.~\ref{fig:teaser}).
However, content and style are not orthogonal, rather those are intertwined concepts. Therefore, entangling such concepts remains challenging. 

In order to efficiently merge content and style specific LoRAs, we first learn masks in the rank dimension, with UNet layer-prior informed initialization, explicit rank and sparsity constraint. To further improve performance, we use content-style cycle-consistent merging. We explain this as follows.

\subsection{ZipRank: Merging in Rank space}

We start with providing background on LoRA.
\newline
\textbf{Low-Rank Adaptation (LoRA): }
In LoRA~\cite{hu2021lora}, we fine-tune neural networks by approximating weight updates using low-rank matrices. Specifically, the weight update matrix $\Delta W$ is parameterized as: $\Delta W = A B = U_r \Sigma_r V_r^\top$, 
where $A \in \mathbb{R}^{d_{\text{out}} \times r}, B \in \mathbb{R}^{r \times d_{\text{in}}},$ and $r$ is the rank of the approximation.

Blending LoRAs optimally presents a significant challenge. A straightforward approach is to use simple arithmetic merging, but this is not efficient.
ZipLoRA~\cite{shah2025ziplora} (Fig.~\ref{fig:teaser}) uses the Zipit~\cite{stoica2023zipit} operation across model weights through learnable masks in the output dimension as defined below, ensuring the masks in the weight space remain orthogonal.

\begin{definition}{Output Dimension Masking (ZipLoRA~\cite{shah2025ziplora}): }
Here we apply a mask to the output dimension by defining a diagonal mask matrix $M_{\text{out}} \in \mathbb{R}^{d_{\text{out}} \times d_{\text{out}}}$ with entries $n_{ii} \in \{0, 1\}$. The output-masked approximation is:
\[
\Delta W_{\text{out}} = M_{\text{out}} A B = M_{\text{out}} U_r \Sigma_r V_r^\top.
\]
Let $d_s = \sum_{i=1}^{d_{\text{out}}} n_{ii}$ be the number of active output units.
\end{definition}

Our approach, however, focuses on learning masks in the rank dimension (ZipRank, Fig.~\ref{fig:method}), which greatly reduces the number of learnable parameters and provides adaptive rank flexibility across the LoRA adapters.

\begin{definition}{Rank Dimension Masking (ZipRank): }
We apply a mask to the rank dimension by defining a diagonal mask matrix $M_r \in \mathbb{R}^{r \times r}$ with entries $m_{ii} \in \{0, 1\}$. The rank-masked approximation is:
\[
\Delta W_{\text{rank}} = A M_r B = U_r M_r \Sigma_r V_r^\top.
\]
Let $S$ denote the set of indices $i$ where $m_{ii} = 1$, and $s = |S|$ is the number of active rank components.
\end{definition}

ZipRank is designed to separate the style and content adapters while minimizing the number of learnable parameters. This approach reduces overfitting, especially when training on small datasets, and introduces adaptive rank flexibility during the merging of LoRAs. We have also shown that under the same parameter budget, the approximation error resulting from rank dimension masking is less than or equal to that from output dimension masking in Theorem~\ref{theroem:output_rank_merging}.

\begin{theorem} In LoRA merging, under the same parameter budget, the approximation error resulting from rank dimension masking is less than or equal to that from output dimension masking. Formally, $E_{\text{rank}} \leq E_{\text{out}},$
where the approximation error using rank dimension masking ($E_{\text{rank}}$), and output dimension masking ($E_{\text{out}}$) are given by,
\[E_{\text{rank}} = \| X - \Delta W_{\text{rank}} \|_F, \quad E_{\text{out}} = \| X - \Delta W_{\text{out}} \|_F\]
\label{theroem:output_rank_merging}
\end{theorem}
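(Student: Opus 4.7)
The plan is to reduce the statement to the Eckart--Young--Mirsky theorem applied to the common SVD $\Delta W = U_r \Sigma_r V_r^\top$, viewing both masking schemes as producing rank-constrained approximations to a target $X$ (where $X$ is naturally the full LoRA update $\Delta W$ that one wishes to preserve post-merging).

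First, I would rewrite both masked approximations in a comparable singular-value form. For rank masking, $\Delta W_{\text{rank}} = U_r M_r \Sigma_r V_r^\top = \sum_{i \in S}\sigma_i\, u_i v_i^\top$, so $\Delta W_{\text{rank}}$ is always a rank-$s$ matrix with $s=|S|$. For output masking, $\Delta W_{\text{out}} = M_{\text{out}} U_r \Sigma_r V_r^\top$ is a matrix whose $d_{\text{out}}-d_s$ rows are identically zero; in particular its rank is at most $d_s$. The next step is to make precise what ``same parameter budget'' means here: I would take it to mean that the two masks produce approximations of the same effective rank, i.e.\ $s = d_s$, so that the comparison is apples-to-apples.

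Under this matched budget, I would invoke Eckart--Young--Mirsky: among all matrices $Z$ with $\operatorname{rank}(Z)\le s$, the Frobenius-norm minimizer of $\|\Delta W - Z\|_F$ is the truncated SVD $\sum_{i=1}^{s}\sigma_{(i)} u_{(i)} v_{(i)}^\top$, achieving error $\bigl(\sum_{i>s}\sigma_{(i)}^2\bigr)^{1/2}$. Choosing $S$ to index the top-$s$ singular values of $\Delta W$, the rank-masked approximation $\Delta W_{\text{rank}}$ equals this optimal truncated SVD exactly, so the minimum attainable $E_{\text{rank}}$ is the Eckart--Young bound. Since $\Delta W_{\text{out}}$ is also a rank-$\le s$ matrix (constrained to the additional structural restriction of row-sparsity), it cannot do better than the unconstrained best rank-$s$ approximation, giving $E_{\text{rank}}\le E_{\text{out}}$.

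The main obstacle is the definition of ``same parameter budget,'' since rank and output masks live in spaces of very different sizes ($r$ vs.\ $d_{\text{out}}$ binary entries). I would address this explicitly by equating the rank of the two resulting approximations, which is the operationally meaningful notion for the approximation error and makes Eckart--Young directly applicable. A minor secondary point is to argue optimality in the choice of active indices: for $E_{\text{rank}}$ this is immediate (pick the top-$s$ singular values), while for $E_{\text{out}}$ I would only need to use that \emph{any} feasible $M_{\text{out}}$ yields a rank-$\le s$ matrix, so the inequality holds even when comparing the best rank-masked choice against the best output-masked choice.
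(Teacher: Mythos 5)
There is a genuine gap, and it lies exactly where you flagged the ambiguity: the meaning of ``same parameter budget.'' In the paper this phrase is meant literally as a count of stored numbers: rank masking with $s$ active components costs $P_{\text{rank}} = s(d_{\text{out}}+d_{\text{in}})$ (you keep $s$ columns of $A$ and $s$ rows of $B$), while output masking with $d_s$ active rows costs $P_{\text{out}} = d_s r + r d_{\text{in}}$ (you keep a $d_s\times r$ block of $A$ and all of $B$). Equating these with $d_{\text{out}}=d_{\text{in}}=d$ gives $s = r(d_s+d)/(2d) = r\bigl(1-\tfrac{f}{2}\bigr)$ with $f=(d-d_s)/d$, \emph{not} $s=d_s$. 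In the regime of interest ($r\ll d$, $d_s$ a constant fraction of $d$) this means $\operatorname{rank}(\Delta W_{\text{out}})\le\min(d_s,r)=r$ while the rank-masked approximation keeps only $s<r$ components. Your Eckart--Young argument then points the wrong way: $\Delta W_{\text{out}}$ is \emph{not} confined to the rank-$\le s$ ball, so you cannot conclude it does no better than the best rank-$s$ truncation. The entire content of the theorem under the paper's budget is that the structural damage of zeroing a fraction $f$ of the rows of $U_r$ (which destroys energy in \emph{all} of the top $r$ singular directions) outweighs the benefit of retaining higher rank; the paper supplies this via a lower bound $E_{\text{out}}^2 \ge f\sum_{i\le r}\sigma_i^2 + \sum_{i>r}\sigma_i^2$ and then compares it to $E_{\text{rank}}^2=\sum_{i>s}\sigma_i^2$ using the relation $s=r(1-f/2)$. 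Your proposal has no substitute for this quantitative lower bound on the cost of row-masking.

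To be fair, the half of your argument that does work --- that choosing $S$ to index the top $s$ singular values makes $\Delta W_{\text{rank}}$ the Eckart--Young-optimal rank-$s$ approximation, so $E_{\text{rank}}^2=\sum_{i>s}\sigma_i^2$ --- coincides with the paper's treatment of $E_{\text{rank}}$ and is the clean part of both proofs. And in the special case $d_s\le s$ your comparison is valid and is arguably tighter than the paper's energy heuristic. But as a proof of the stated theorem under a parameter-count budget, the argument is incomplete: you must either adopt the paper's budget relation $s=r(1-f/2)$ and then bound the error induced by masking rows of $U_r$, or explicitly restrict the claim to the regime where the output mask forces $\operatorname{rank}(\Delta W_{\text{out}})\le s$.
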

\vspace{-1cm}
\begin{proof}
Proof is provided in the supplementary material.
\end{proof}

\subsection{Layer priors}

\begin{figure}
    \centering
        \centering
        \includegraphics[width=0.45\textwidth]{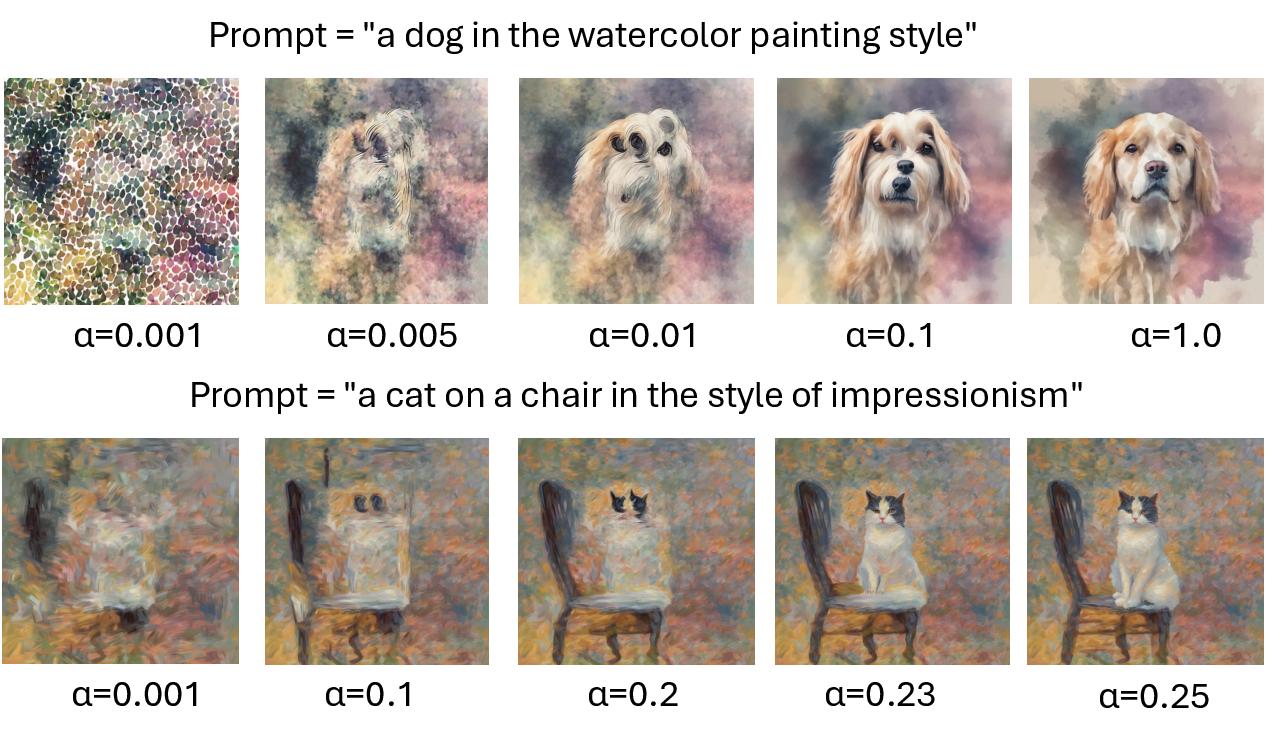} 
        \vspace{-0.2cm}
        \caption{SDXL selective weight scaling. The scaling parameter ($\alpha$)
has been applied to the content-dominant blocks (\texttt{up\_block.2},
\texttt{down\_block.2}, and \texttt{mid\_block}). We observe that with a
smaller $\alpha$, the model fails to generate the content and instead
focuses solely on the style. In contrast, increasing $\alpha$ allows the
model to generate the content specified in the prompt.}
        \vspace{-0.4cm}
        \label{fig:sdxl_alpha}   
 \end{figure}

\subsubsection{How content style is encoded in SDXL}
\label{subsec:understanding_sdxl}

Understanding how content, style, and similar information are encoded within the diffusion UNet is crucial for enabling both local and global edits. 
We hypothesize that in SDXL, layers with low resolutions (e.g.,\texttt{up\_block.2, down\_block.2, mid\_block}, with resolution < 32) capture localized updates, such as content details, while layers with high resolutions (e.g.,\texttt{up\_block.1, down\_block.1}, with resolution >= 32) capture global updates, such as the overall style of the image.

This hypothesis is motivated by the structural design of the SDXL model: layers with low resolutions, typically in the earlier stages of the U-Net, focus on small, localized regions, allowing them to encode essential spatial details, shapes, and structures. This fine-grained attention enables accurate representation of core content elements, like objects and scene layouts, aligning closely with the input prompt.

In contrast, layers with high resolutions, often in deeper sections of the model, capture style by blending and harmonizing details across larger regions. This broader scope makes them well-suited for encoding global stylistic features like color gradients, textures, and lighting. Such attributes require a coherent application across the image rather than precision in spatial detail. Consequently, high resolutions are believed to play a crucial role in generating the overall aesthetic and mood of the image, supporting their role in style encoding.

To verify the hypothesis, we use a layer-freezing simulation via weight scaling. This method allows us to selectively reduce the contributions of specific layers during inference, enabling us to observe the distinct roles that different resolutions play in image generation.

We begin by identifying the layers in the SDXL U-Net architecture associated with low and high resolutions. Layers such as \texttt{up\_block.2}, \texttt{down\_block.2}, and \texttt{mid\_block} have resolutions smaller than 32, while layers like \texttt{up\_block.1} and \texttt{down\_block.1} have resolutions larger than 32.

Next, we apply selective weight scaling to simulate a ``freezing'' effect. We scale the outputs of either low- or high-resolution layers by a small factor, $\alpha$ (e.g., 0.1), to reduce their influence in the generation process. Scaling down the low-resolution layers diminishes their impact on content details, while scaling down the high-resolution layers lessens the effect on style features.

For instance, when using the prompt, \texttt{p = ``A cat on a chair in the style of impressionism''} during inference, we scale down the weights of the low-resolution layers (\texttt{up\_block.2}, \texttt{down\_block.2}, and \texttt{mid\_block}) by a scalar factor $\alpha$. By varying $\alpha$, as shown in Fig.~\ref{fig:sdxl_alpha}, we observe that lower values of $\alpha$ generate the style but fail to capture the object (i.e., the ``cat''). As we increase $\alpha$, the object becomes visible in the generated image. This trend, illustrated in Fig.~\ref{fig:sdxl_alpha}, supports our hypothesis that low-resolution layers (\texttt{up\_block.2}, \texttt{down\_block.2}, and \texttt{mid\_block}) contribute to content generation, while layers with higher resolution (\texttt{up\_block.1} and \texttt{down\_block.1}) contribute to style. These observations align with prior findings in the literature~\cite{basu2024mechanistic}.




Recent works, like B-LoRA~\cite{frenkel2024implicit}, have identified specific layers involved in content (W4 in SDXL) and style (W5 in SDXL). However, these findings are highly specific and may not generalize widely.
Rather than adhering strictly to these findings, we incorporate this knowledge in a more generalized way by blending content-style LoRAs.
In SDXL, we observe that layers \texttt{up\_block.2}, \texttt{down\_block.2}, and \texttt{mid\_block}, which have lower resolutions ($\leq$ 32), are more involved in content generation. Conversely, layers \texttt{up\_block.1} and \texttt{down\_block.1}, with higher resolutions ($\geq$32), contribute more to style generation.
Building on this observation, we apply a prior-informed initialization strategy and implicit rank constraint during the merging process. For instance, during merging, we enforce sparsity along with the implicit rank constraint.



\subsubsection{Prior-informed merger initialization}
Instead of initializing the merger masks ($m_c$ and $m_s$ in Fig.~\ref{fig:method}) in the rank dimension with all ones, we incorporate layer-specific information to enhance the merging process. To this end, we use the observation of content-style encoding in SDXL architecture made earlier. 
To capture the dominance of content and style in different layers, the masks for the merged LoRAs are initialized using content (\(T_{\text{content}}\)) and style (\(T_{\text{style}}\)) thresholds. In content-dominant layers, more ones are assigned to content merger $m_c$ than style merger $m_s$, while the reverse is applied in style-dominant layers. The process begins with a normalized random vector, using thresholds to determine priority for content or style based on their ranks, defaulting to all ones when ranks are equal. For other layers, LoRAs are initialized with all ones to adaptively balance content and style merging. As demonstrated in Tab.~\ref{tab:loss_ablation}, this adaptive initialization improves performance. The threshold ablation is shown in Tab.~\ref{tab:init_thres_ablation}, and the detailed algorithm is provided in the supplementary.




\begin{figure}[h]
    \centering
    \begin{minipage}{0.23\textwidth}
        \centering
        \includegraphics[width=\textwidth]{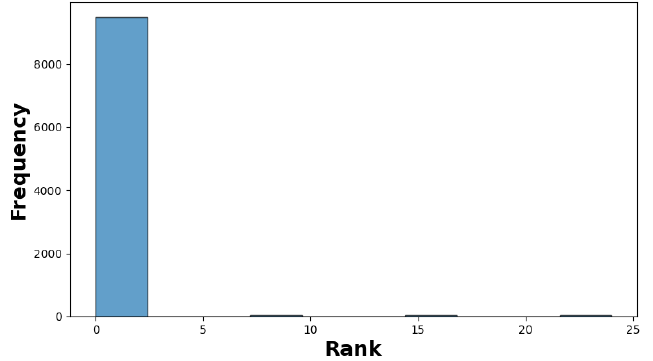}
        \subcaption{Histogram of rank across low resolution layers in style merger.}
    \end{minipage}%
    \hfill
    \begin{minipage}{0.23\textwidth}
        \centering
        \includegraphics[width=\textwidth]{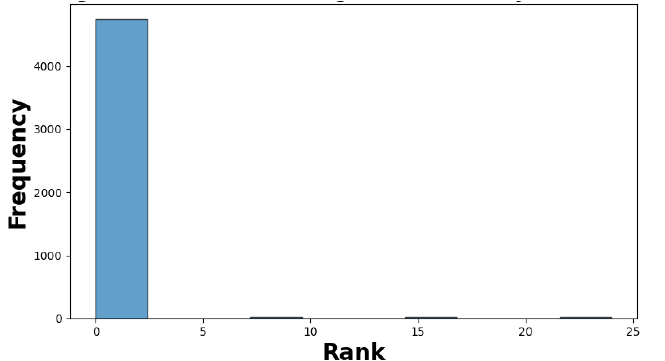}
        \subcaption{Histogram of rank across high resolution layers in content merger.}
    \end{minipage}
    \vspace{0.5cm} 
    \begin{minipage}{0.23\textwidth}
        \centering
        \includegraphics[width=\textwidth]{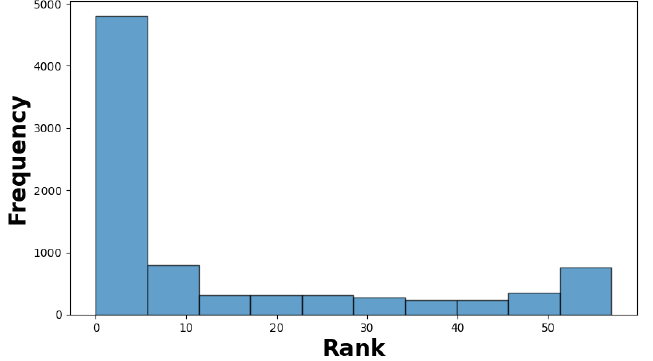}
        \subcaption{Histogram of rank across low resolution layers in content merger.}
    \end{minipage}%
    \hfill
    \begin{minipage}{0.23\textwidth}
        \centering
        \includegraphics[width=\textwidth]{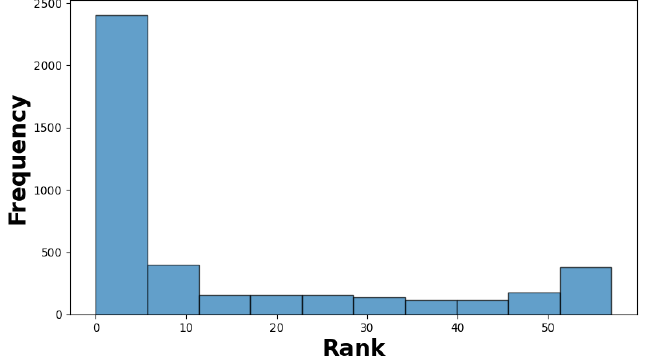}
        \subcaption{Histogram of rank across high resolution layers in style merger.}
    \end{minipage}
    \vspace{-0.5cm}
    \caption{Rank analysis. We plot the frequency of ranks across low-resolution layers (i.e., \texttt{up\_block.2}, \texttt{down\_block.2}, and \texttt{mid\_block} with a resolution < 32) and high-resolution layers (resolution >= 32) in content and style mergers post-training. The rank distribution confirms that in SDXL UNet architecture, for low-resolution layers, $Rank(m_c) > Rank(m_s)$, while for high-resolution layers, $Rank(m_s) > Rank(m_c)$.}
    \vspace{-0.4cm}
    \label{fig:rank_analysis}
\end{figure}



\begin{algorithm}
\SetAlgoLined
\KwIn{SDXL-UNet, content merger ($m_c$), style merger ($m_s$)}
\KwOut{Loss $\mathcal{L}_{\text{layer\_prior}}$ for merging layers}

\uIf{Resolution(SDXL-UNet layer) $<$ 32}{
    \tcp*[h]{$Rank(m_c) > Rank(m_s)$} \\
        $\mathcal{L}_{\text{layer\_prior}} \gets ||m_c||_1 + \lambda \, max\left(0, \|m_s\|_* - \|m_c\|_* \right) $\;
    }
\Else{
    { 
    \tcp*[h]{$Rank(m_s) > Rank(m_c)$} \\
    $\mathcal{L}_{\text{layer\_prior}} \gets ||m_s||_1 + \lambda \, max\left(0, \|m_c\|_* - \|m_s\|_* \right)$ \;
    }
}
\caption{Rank-constrained Layer-prior Merging}
\label{alg:layer_prior_merging}
\end{algorithm}

\subsubsection{Prior-informed rank constraint}
Based on our observations in Sec.~\ref{subsec:understanding_sdxl}, we propose to use the explicit layer-specific rank constraints during merging.
In Fig.~\ref{fig:rank_analysis}, we plot the frequency of ranks across low (i.e., \texttt{up\_block.2}, \texttt{down\_block.2}, and \texttt{mid\_block} with < 32 resolution) and high-resolution (>=32 resolution) layers in content and style mergers post training. Through the distribution of the ranks, we can verify that for low resolution layers, $Rank(m_c) > Rank(m_s)$, and for high resolution layers, $Rank(m_s) > Rank(m_c)$ holds in SDXL UNet architecture.

Therefore, for the content-dominant layers, we apply the constraint \( \text{Rank($m_c$)} \geq \text{Rank($m_s$)} \) during merging, and for the style-dominant layers, we reverse this constraint. In Lemma~\ref{Lemma:lemma_1}, we demonstrate that this rank constraint simplifies to a nuclear norm minimization problem under a sparsity constraint.

\begin{lemma}
\label{Lemma:lemma_1}
Let \( m_c \in \mathbb{R}^{m \times n} \) be a matrix representing the content merger and \( m_s \in \mathbb{R}^{m \times n} \) be a matrix representing the style merger. The problem of minimizing the \( l_1 \)-norm of \( m_c \) subject to a rank constraint on \( m_c \) can be written as:
\[
\min \|m_c\|_1 \quad \text{subject to} \quad \text{rank}(m_c) > \text{rank}(m_s)
\]
This problem is non-convex due to the rank constraint. A convex relaxation can be achieved by approximating the rank of a matrix using the nuclear norm \( \| \cdot \|_* \), which is the sum of the singular values of the matrix. Thus, the original problem can be relaxed to:
\[
\min \|m_c\|_1 \quad \text{subject to} \quad \|m_c\|_* > \|m_s\|_*
\]
where \( \|m_c\|_* \) denotes the nuclear norm of \( m_c \), and \( \|m_s\|_* \) is the nuclear norm of \( m_s \).

This relaxed problem can be approached via a Lagrangian penalty formulation:
\[
\mathcal{L}(m_c, m_s, \lambda) = \|m_c\|_1 + \lambda \max(0, \|m_s\|_* - \|m_c\|_*)
\]
for some penalty parameter \( \lambda \geq 0 \), which enforces the constraint \( \|m_c\|_* > \|m_s\|_* \) in the limit as \( \lambda \to \infty \).
\end{lemma}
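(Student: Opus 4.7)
The plan is to establish the lemma in three stages: first justify the non-convexity of the original problem, then motivate the nuclear norm as the standard convex surrogate for rank, and finally verify that the proposed Lagrangian penalty enforces the relaxed inequality constraint in the limit.

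First, I would note that $\|m_c\|_1$ (entrywise $\ell_1$) is convex, so convexity of the original program hinges entirely on the feasible set $\{m_c : \text{rank}(m_c) > \text{rank}(m_s)\}$. I would exhibit a short argument that $\text{rank}(\cdot)$ is neither convex nor concave on $\mathbb{R}^{m\times n}$ (e.g., a convex combination of two rank-$1$ matrices can have rank $2$, violating quasi-convexity of the sublevel sets in the required direction), which implies the feasible set defined by a strict lower bound on rank is non-convex; hence the problem is non-convex and combinatorial in nature.

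Second, I would invoke the standard result that the nuclear norm $\|\cdot\|_*$ is the convex envelope of $\text{rank}(\cdot)$ on the spectral unit ball (Fazel's thesis; Recht--Fazel--Parrilo). Since $\|X\|_* = \sum_i \sigma_i(X)$ and $\text{rank}(X) = \#\{i : \sigma_i(X) > 0\}$, the nuclear norm is precisely the $\ell_1$-norm of the singular value vector, while rank is the $\ell_0$-norm of the same vector. Replacing $\ell_0$ by $\ell_1$ on the singular values is the canonical convex relaxation, which yields the surrogate constraint $\|m_c\|_* > \|m_s\|_*$. I would briefly comment that, although monotonicity of $\|\cdot\|_*$ in rank only holds up to singular-value magnitude, this relaxation is the standard one used throughout low-rank optimization and recovers the original rank ordering whenever the singular values are on comparable scales.

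Third, for the Lagrangian formulation, I would write the relaxed problem in the form $\min_{m_c} \|m_c\|_1$ subject to $g(m_c) := \|m_s\|_* - \|m_c\|_* \leq 0$ (treating $m_s$ as fixed and replacing the strict inequality by a non-strict one, which is standard in convex relaxation). The penalty $\lambda\,\max(0, g(m_c))$ is the exact penalty for inequality constraints: for any feasible $m_c$ the penalty vanishes, and for any infeasible $m_c$ it grows linearly in the violation. Taking $\lambda \to \infty$ drives any minimizer into the feasible region, so $\arg\min_{m_c} \mathcal{L}(m_c, m_s, \lambda)$ converges to a solution of the constrained problem; this is a standard exact-penalty / barrier argument that I would cite rather than re-derive.

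The main obstacle in a rigorous write-up is the monotonicity gap between rank and nuclear norm: $\|m_c\|_* > \|m_s\|_*$ does \emph{not} imply $\text{rank}(m_c) > \text{rank}(m_s)$ without further normalization, so the step from the rank constraint to the nuclear-norm constraint is a heuristic relaxation rather than an equivalence. I would address this by framing the step explicitly as a convex surrogate (in line with the broader literature on nuclear-norm minimization) rather than claiming an implication, and by noting that the Lagrangian of the relaxed problem is what is ultimately used in Algorithm~\ref{alg:layer_prior_merging}.
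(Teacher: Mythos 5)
Your proposal follows essentially the same route as the paper's proof: replace the rank constraint by the nuclear-norm constraint via the convex-envelope justification, then enforce $\|m_c\|_* > \|m_s\|_*$ through the hinge penalty $\lambda\max(0,\|m_s\|_*-\|m_c\|_*)$ driven to feasibility as $\lambda\to\infty$. Your explicit acknowledgement that $\|m_c\|_*>\|m_s\|_*$ does not imply $\mathrm{rank}(m_c)>\mathrm{rank}(m_s)$ (so the step is a surrogate, not an equivalence) is a caveat the paper's own proof leaves implicit, but it does not change the argument.
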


\begin{proof}
    Proof is provided in the supplementary material.
\end{proof}

Consequently, in the content-dominant layers (i.e., \texttt{up\_block.2}, \texttt{down\_block.2}, and \texttt{mid\_block}), LoRAs are merged using the layer prior loss \( \mathcal{L}_{\text{layer\_prior}} \).
\begin{equation}
\mathcal{L}_{\text{layer\_prior}} = \|m_c\|_1 + \lambda \max(0, \|m_s\|_* - \|m_c\|_*)
\end{equation}


where \( m_s \) and \( m_c \) represent the style and content mergers, respectively.
Similarly, for the style-dominant layers, the process is reversed, as outlined in Algorithm~\ref{alg:layer_prior_merging}.


\subsection{Constyle loss : Cycle-consistent merging}

To further improve the content-style alignment, we introduce Constyle loss, leveraging the cycle-consistency between content and style.
Inspired by CycleGAN~\cite{zhu2017unpaired}, where content and style are transformed across domains with minimized reconstruction loss, we add and then remove style from content, ensuring minimal reconstruction error during LoRA merging. Similarly, content is added and removed from style images, as illustrated in Fig.~\ref{fig:method}.
Both processes provide feedback on the required blending while upholding rank constraints. Next, we describe cycle-consistent merging in detail (color coded for better explanation, refer to Fig.~\ref{fig:method}).

The cycle-consistent merging is performed as follows:

 \begin{enumerate}[leftmargin=0cm]  
    \item \textbf{Generate Individual LoRAs:}
    \begin{itemize}[leftmargin=0cm] 
        \item \textbf{Content LoRA:} Given a set of content images \( I_c \), we learn a content LoRA $\textcolor{blue}{\bf{L_c}}$, and tokens \textcolor{blue}{$\textbf{<V1>}$}, \textcolor{blue}{$\textbf{<S1>}$} using the prompt 
        $\textcolor{blue}{\bf{p_c}} = \texttt{"a \textcolor{blue}{$\textbf{<V1>}$} object in \textcolor{blue}{$\textbf{<S1>}$} style"}$.
        \item \textbf{Style LoRA:} For a (small) set of style images \( I_s \), we learn a style LoRA $\textcolor{red}{\bf{L_s}}$, and tokens \textcolor{red}{$\textbf{<V2>}$}, \textcolor{red}{$\textbf{<S2>}$} using the prompt 
        $\textcolor{red}{\bf{p_s}} = \texttt{"a \textcolor{red}{$\textbf{<V2>}$} object in \textcolor{red}{$\textbf{<S2>}$} style"}$.
    \end{itemize}
    
    \item \textbf{Cycle-Consistency Across Style:}
    Starting with a content image \( I_c \), we add noise to create its noisy latent. During the denoising process, the style prompt \texttt{"a \textcolor{blue}{$\textbf{<V1>}$} object in \textcolor{red}{$\textbf{<S2>}$} style"} is injected via the style LoRA $\textcolor{red}{\bf{L_s}}$ to produce a style-infused image \( I_{cs} \).
    This image \( I_{cs} \) is then re-noised and denoised with the content prompt $\textcolor{blue}{\bf{p_c}}$ (via $\textcolor{red}{\bf{L_s}}$) to remove style, resulting in the reconstructed content image \( I_{csc} \).
    To mitigate mode collapse from limited examples, we also generate a variant \( I_{cc} \) by denoising \( I_c \) solely with the content LoRA $\textcolor{blue}{\bf{L_c}}$ using $\textcolor{blue}{\bf{p_c}}$.
    We enforce cycle-consistency in style by minimizing the loss (Fig.~\ref{fig:method}):
    \vspace{-0.2cm}
        \[
        \mathcal{L}_{\text{cycle\_style}} = \text{MSE}(I_{cc}, I_{csc})
        \]
    
    \item \textbf{Cycle-Consistency Across Object (Content):} 
    In a similar manner, starting with a style image \( I_s \), we add noise and then denoise using a content prompt \texttt{"a \textcolor{blue}{$\textbf{<V1>}$} object in \textcolor{red}{$\textbf{<S2>}$} style"} (via the content LoRA $\textcolor{blue}{\bf{L_c}}$) to obtain a content-injected style image \( I_{sc} \).
    Next, we remove the injected object by denoising \( I_{sc} \) with the style prompt $\textcolor{red}{\bf{p_s}}$ via the content LoRA $\textcolor{blue}{\bf{L_c}}$, yielding \( I_{scs} \).
    Additionally, we generate a variant \( I_{ss} \) by denoising \( I_s \) with the style LoRA $\textcolor{red}{\bf{L_s}}$ using $\textcolor{red}{\bf{p_s}}$.
    The cycle-consistency in content is enforced by minimizing (Fig.~\ref{fig:method}):
    \vspace{-0.2cm}
    \[
        \mathcal{L}_{\text{cycle\_content}} = \text{MSE}(I_{ss}, I_{scs})
    \]
\end{enumerate}

The merged LoRA \( L_m \) is then trained by minimizing the following loss:
\vspace{-0.2cm}
\begin{align*}
    \mathcal{L}_{\text{constyle}} &= \left\|(D + L_m)(I_c, \textcolor{blue}{\bf{p_c}}) - (D + \textcolor{blue}{\bf{L_c}})(I_c, \textcolor{blue}{\bf{p_c}})\right\| \\
    &\quad + \left\|(D + L_m)(I_s, \textcolor{red}{\bf{p_s}}) - (D + \textcolor{red}{\bf{L_s}})(I_s, \textcolor{red}{\bf{p_s}})\right\| \\
    &\quad + \left( \mathcal{L}_{\text{cycle\_style}} + \mathcal{L}_{\text{cycle\_content}} \right)
\end{align*}
\vspace{-0.1cm}
where \( D \) is the text-to-image diffusion model and \( \lambda_{cycle} \) (set to 0.1) controls the weight of the cycle-consistency losses. The overall loss used during merging is given by
\vspace{-0.2cm}
\[
\mathcal{L} = \lambda_{\text{layer\_prior}}\,\mathcal{L}_{\text{layer\_prior}} + \lambda_{cycle}\,\mathcal{L}_{\text{constyle}}
\]
\vspace{-0.1cm}
which is further used to train the layer masks. During inference, the merged tokens are provided in the prompt (e.g., \texttt{"a \textcolor{blue}{$\textbf{<V1>}$} object in \textcolor{red}{$\textbf{<S2>}$} style running"}) to the diffusion model with merged LoRA \( L_m \), generating images that blend the desired content and style. Detailed algorithm is provided in the supplementary.

\subsection{Multi-concept stylization}

We further extend our approach to handle multi-concept stylization.
Given concepts $C_1$, .. $C_n$ and style $S$, we first merge individual concept-style LoRAs, $<C_1, S>, .. <C_n, S>$, using DuoLoRA/baselines and then combine them via naive merging: $C_{1,2,.,n,S}=\alpha_1<C_1, S>+..+\alpha_n<C_n, S>$, where $\alpha_{i}=1/n$. During inference, we use directional prompting (i.e., object in left/right etc) with the merged LoRA \( C_{1,2,.,n,S} \), using prompt \texttt{p = ``a $<C1>$ object on the left and a $<C2>$ object on the right .. in $<S>$ style''}. In this way, we extend DuoLoRA for multiple concepts.
\section{Experiments}
\label{sec:experiments}

\begin{table*}[h]
\centering
\captionsetup{font=footnotesize}
\caption{Performance comparison of content and style merging across different datasets and methods}
\vspace{-0.2cm}
\scalebox{0.55}{
\begin{tabular}{l|cccc|cccc|cccc|cccc|c|c|c}
\hline
\textbf{Method} & \multicolumn{4}{c|}{\textbf{Dreambooth + StyleDrop}} & \multicolumn{4}{c|}{\textbf{Subjectplop}} & \multicolumn{4}{c|}{\textbf{Subjectplop + StyleDrop}} & \multicolumn{4}{c|}{\textbf{Custom101 + StyleDrop}} & \textbf{\# Params} & \textbf{Param} & \textbf{Training}\\
 & \textbf{DINO} & \textbf{CLIP-I} & \textbf{CLIP-T} & \textbf{CSD-s} & \textbf{DINO} & \textbf{CLIP-I} & \textbf{CLIP-T} & \textbf{CSD-s} & \textbf{DINO} & \textbf{CLIP-I} & \textbf{CLIP-T} & \textbf{CSD-s} & \textbf{DINO} & \textbf{CLIP-I} & \textbf{CLIP-T} & \textbf{CSD-s} & \textbf{(M)} &\textbf{storage (MB)} & \textbf{time (m)}\\
\hline
Naïve Merging                 & 0.47 & 0.64 & 0.266 & 0.44 & 0.48 & 0.59 & 0.263 & 0.30 & 0.42 & 0.49 & 0.274 & 0.12 & 0.40 & 0.39 & 0.204 & 0.18 & - & - & -  \\
B-LoRA~\cite{frenkel2024implicit} (ECCV'24) & 0.45 & 0.57 & 0.281 & 0.28 & 0.64 & 0.57 & 0.275 & 0.32 & 0.63 & 0.56 & 0.281 & 0.14 & 0.49 & 0.51 & 0.263 & 0.25 & - & - & - \\
ZipLoRA~\cite{shah2025ziplora} (ECCV'24)    & 0.53 & 0.65 & 0.285 & 0.41 & 0.75 & 0.62 & 0.288 & 0.35 & 0.87 & 0.56 & 0.289 & 0.16 & 0.54 & 0.58 & 0.286 & 0.30 &  1.33 & 6.5 & 5.48\\
\rowcolor{gray!20} ZipRank                  & 0.53 & 0.64 & 0.287 & 0.42 & 0.71 & 0.62 & 0.291 & 0.35 & 0.86 & 0.56 & 0.296 & 0.17 & 0.56 & 0.58 & 0.295 & 0.32 & 0.07 & 0.35 & \textbf{5.28} \\
\rowcolor{gray!20} ZipRank + Layer-Priors & 0.54 & 0.67 & 0.293 & 0.45 & 0.73 & 0.63 & 0.310 & 0.37 & 0.90 & 0.56 & 0.302 & 0.18 & 0.59 & 0.60 & 0.307 & 0.35 & 0.07 & 0.35 & 6.02 \\
\rowcolor{gray!20} DuoLoRA & \textbf{0.56} & \textbf{0.69} & \textbf{0.314}  & \textbf{0.48} & \textbf{0.78} & \textbf{0.65} & \textbf{0.318} & \textbf{0.40} & \textbf{0.90} & \textbf{0.58} & \textbf{0.319} & \textbf{0.20} & \textbf{0.61} & \textbf{0.62} & \textbf{0.316} & \textbf{0.37} &  \textbf{0.07} & \textbf{0.35} & 6.38 \\
\hline
\end{tabular}}
\vspace{-0.4cm}
\label{tab:baseline_comparison}
\end{table*}

\noindent\textbf{Dataset.} 
We experiment on four datasets (datasets and splits provided in supplementary).

\noindent\textbf{(1) Dreambooth-SyleDrop}: We choose diverse set of content images from the Dreambooth dataset~\cite{ruiz2023dreambooth}, which consists of images of 30 subjects with 4-5 images per subjects. Style images are chosen from StyleDrop~\cite{sohn2023styledrop} dataset, where a single image is used per style.

\noindent\textbf{(2) Subjectplop}: subjectplop~\cite{ruiz2024magic} contains a single image for both content and style.

\noindent\textbf{(3) Subjectplop-StyleDrop}: We also benchmark on cross dataset, i.e., contents are taken from subjectplop dataset~\cite{ruiz2024magic}, and style images are taken from StyleDrop~\cite{sohn2023styledrop} dataset. Note here also, the content and style contains a single image.

\noindent\textbf{(4) Custom101-StyleDrop}: We conduct experiments on the real-world object-centric Custom101 dataset~\cite{kumari2023multi} (with 101 objects like human faces and everyday items) using styles from the Styledrop dataset~\cite{sohn2023styledrop}.

\noindent\textbf{Metrics.}
For content similarity, we use DINO similarity score~\cite{ruiz2023dreambooth}, i.e., the average pairwise cosine similarity of DINO ViT-B/6 embeddings of the content and generated images. For style similarity, we use the CLIP-I metric~\cite{ruiz2023dreambooth}, which is the average pairwise cosine similarity between CLIP embeddings of the style and generated images. 
Text similarity (CLIP-T~\cite{ruiz2023dreambooth}) is computed as the average cosine similarity between CLIP’s text and prompt embeddings.
We also report the Contrastive Style Descriptor-style (CSD-s) metric~\cite{somepalli2024measuring}, which is more appropriate to evaluate style similarity.

\noindent\textbf{Implementation details.}
In all our experiments, we have used SDXL v1.0 as our base model. To train the content and style LoRAs, we use Dreambooth finetuning with LoRA of rank 64. We update the LoRA weights using Adam optimizer for 1000 steps with batch size of 1 and learning rate of 5e-4. The text encoder of SDXL remains frozen during the LoRA finetuning. 

During the merging, we train the mergers in the rank dimension for 100 steps, with both the layer-prior loss ($\mathcal{L}_{layer\_prior}$) and the cycle loss ($\mathcal{L}_{constyle}$), with Adam optimizer and a learning rate of 0.01. The hyperparameters are choosen as follows, $\lambda_{cycle} = 0.01$ and $\lambda_{layer\_prior}=0.1$. 

\begin{figure}
\vspace{-0.2cm}
    \centering
        \centering
        \includegraphics[width=0.5\textwidth]{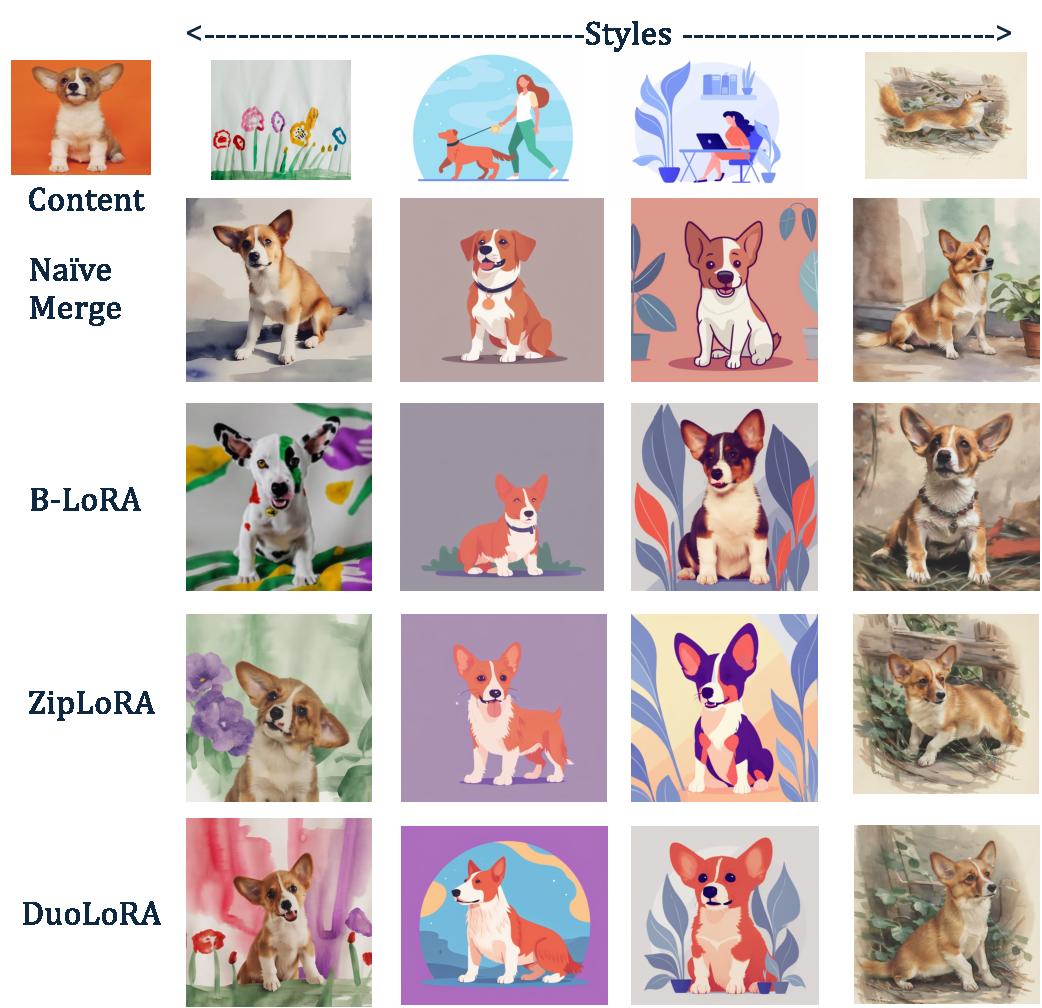}
        \vspace{-0.5cm}
        \caption{Qualitative Results on Dreambooth + StyleDrop.}
        \vspace{-0.8cm}
        \label{fig:qual_dog6}   
\end{figure}

\noindent\textbf{Baselines.}
For personalized stylization, we compare our method with the following baselines.
(1) Naive merging: The content and style adapters are added together with a blending scalar parameter during inference. This merging is training free. (2) B-LoRA~\cite{frenkel2024implicit}: Specific blocks are learnt for content and style in SDXL. (3) ZipLoRA~\cite{shah2025ziplora}: A weighting vector is learned at the output dimension of each adapter, such that their similarity is reduced. We used the parameters from ZipLoRA paper~\cite{shah2025ziplora}. and (4) Paircustomization~\cite{jones2024customizing}.

\noindent\textbf{Quantitative and Qualitative Results.}
We compare our method with Naive merging, B-LoRA and ziplora baselines. Tab.~\ref{tab:baseline_comparison} shows the comparison with different datasets and baseline methods. 
Our method outperforms the SOTA in all the DINO, CLIP and CSD metrics. 
Qualitative results are shown in Fig.~\ref{fig:qual_dog6}, Fig.~\ref{fig:custom101} and Fig.~\ref{fig:supple_qual_3}. Visually it is also evident that our method outperform the baselines. We also compare DuoLoRA with Pair-customization~\cite{jones2024customizing} in Tab.~\ref{tab:compare_paircustomization}, using their test set for a fair evaluation since their approach requires paired object and stylized images (details and qualitative results in supplementary).
To demonstrate that DuoLoRA generated images are not biased toward the content of the style images, we measured the DINO similarity between the content, style, and generated images. On the Dreambooth-StyleDrop benchmark, the average DINO similarity is 0.56 w.r.t content images and 0.25 w.r.t style images, indicating minimal content bias from the style images.

\begin{figure}
    \hspace{-0.3cm}
    \centering
    \includegraphics[scale=0.18]{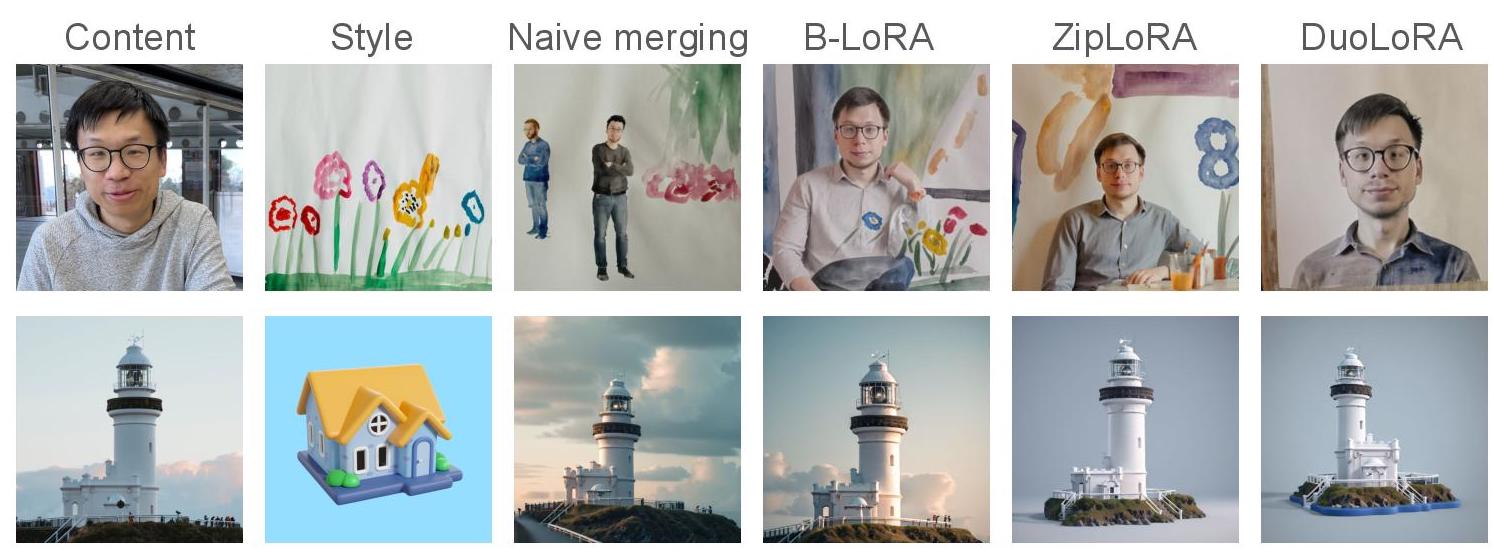}
    \captionsetup{font=footnotesize}
     \vspace{-0.6cm}
    \caption{\footnotesize{Qualitative Results on Custom101 (best viewed in color).}}
    \label{fig:custom101}
    \vspace{-0.4cm}
\end{figure}

\begin{figure}
    \centering
        \centering
        \includegraphics[width=0.5\textwidth]{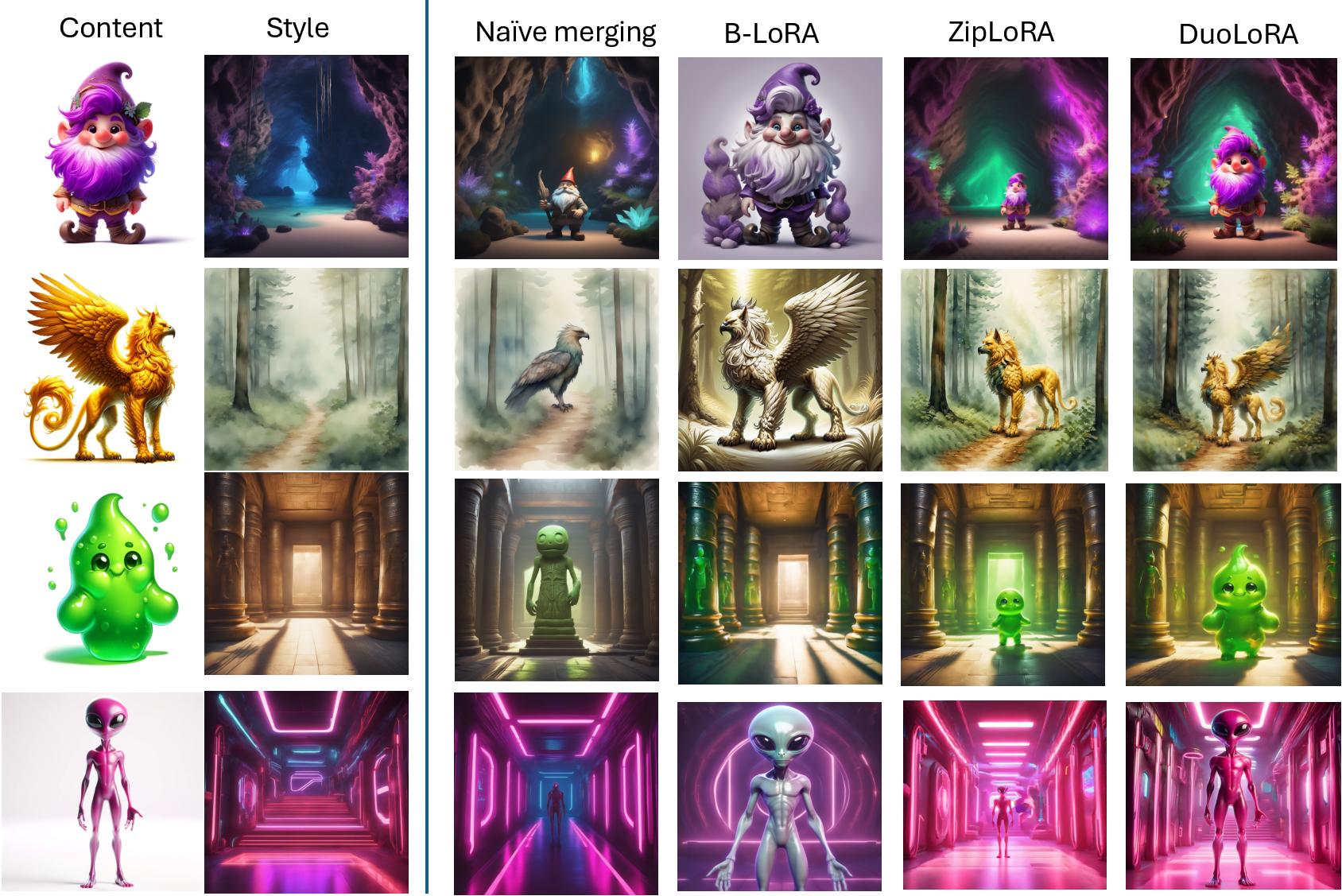} 
        \vspace{-0.6cm}
        \caption{Qualitative Results on Subjectplop (best viewed in color).}
        \vspace{-0.4cm}
        \label{fig:supple_qual_3}   
\end{figure}



\begin{figure}
    \centering
        \centering
        \includegraphics[width=0.45\textwidth]{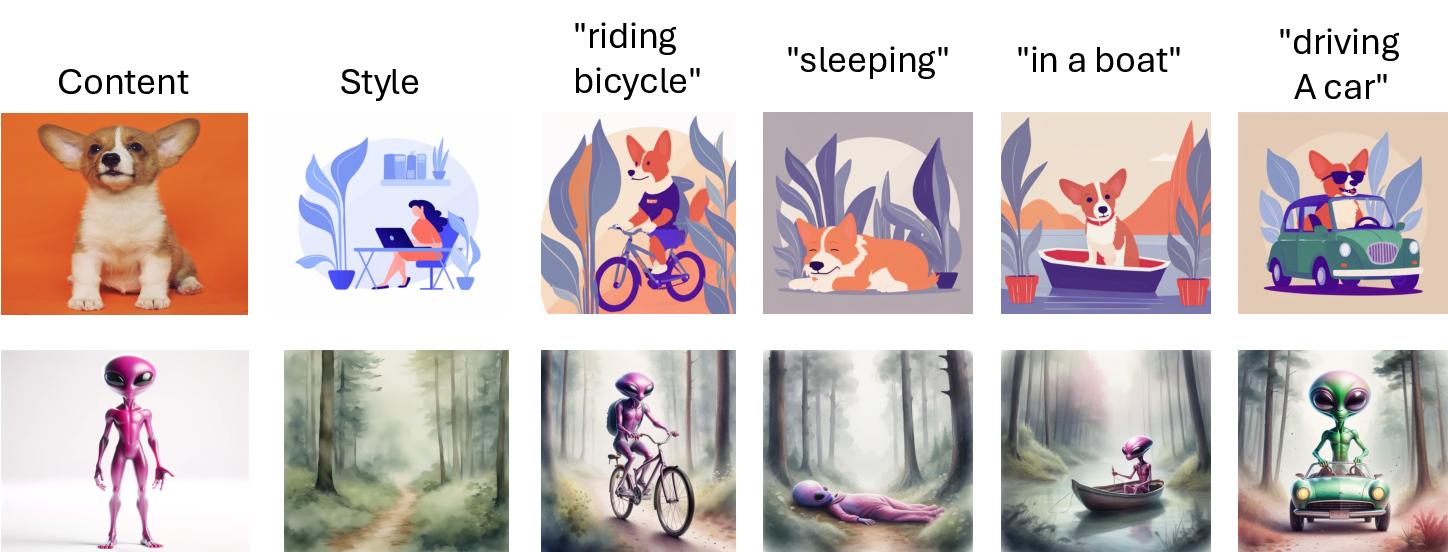} 
        \vspace{-0.3cm}
        \caption{Recontextualization through prompts.}
        \vspace{-0.6cm}
        \label{fig:recontextualization}   
\end{figure}

\noindent\textbf{Recontextulization.}
Tab.~\ref{tab:baseline_comparison} and Fig.~\ref{fig:recontextualization} show DuoLoRA successfully recontextualizes through text prompts, measured by CLIP-T, (e.g., ``a <V> object in <S> style riding a bicycle'') while seamlessly blending content and style (more in supplementary). Fig.~\ref{fig:recontextualization} also demonstrates that our blending preserves the model’s finer text-based editing capabilities, such as depicting ``sleeping'' through closed eyes.


\begin{table}[ht]
\centering
\captionsetup{font=footnotesize}
\vspace{-0.3cm}
\begin{minipage}[t]{0.2\textwidth}
\centering
\caption{Comparison with ~\cite{jones2024customizing}}
\vspace{-0.3cm}
\scalebox{0.55}{%
\begin{tabular}{lccc}
\hline
\textbf{Method} & \textbf{DINO} & \textbf{CLIP-I} & \textbf{CSD-s} \\
\hline
Paircustomization~\cite{jones2024customizing}   & 0.56 & 0.65 & 0.47 \\
\rowcolor{gray!20} DuoLoRA & 0.62 & 0.69 & 0.50 \\
\hline
\end{tabular}%
}
\label{tab:compare_paircustomization}
\end{minipage} \hspace{0.05\textwidth}
\begin{minipage}[t]{0.2\textwidth}
\centering
\caption{Same \#params.}
\vspace{-0.3cm}
\scalebox{0.55}{%
\begin{tabular}{lcccc}
\hline
\textbf{Method} & \textbf{DINO} & \textbf{CLIP-I} & \textbf{CSD-s} & \textbf{\#Params} \\
\hline
ZipLoRA   & 0.53 & 0.65 & 0.41 & 1.33M \\
\rowcolor{gray!20} DuoLoRA & 0.57 & 0.73 & 0.50 & 1.33M \\
\hline
\end{tabular}%
}
\label{tab:compare_ziplora_lp}
\end{minipage}
\vspace{-0.3cm}
\end{table}

\begin{figure}
    \centering
        \centering
        \includegraphics[width=0.5\textwidth]{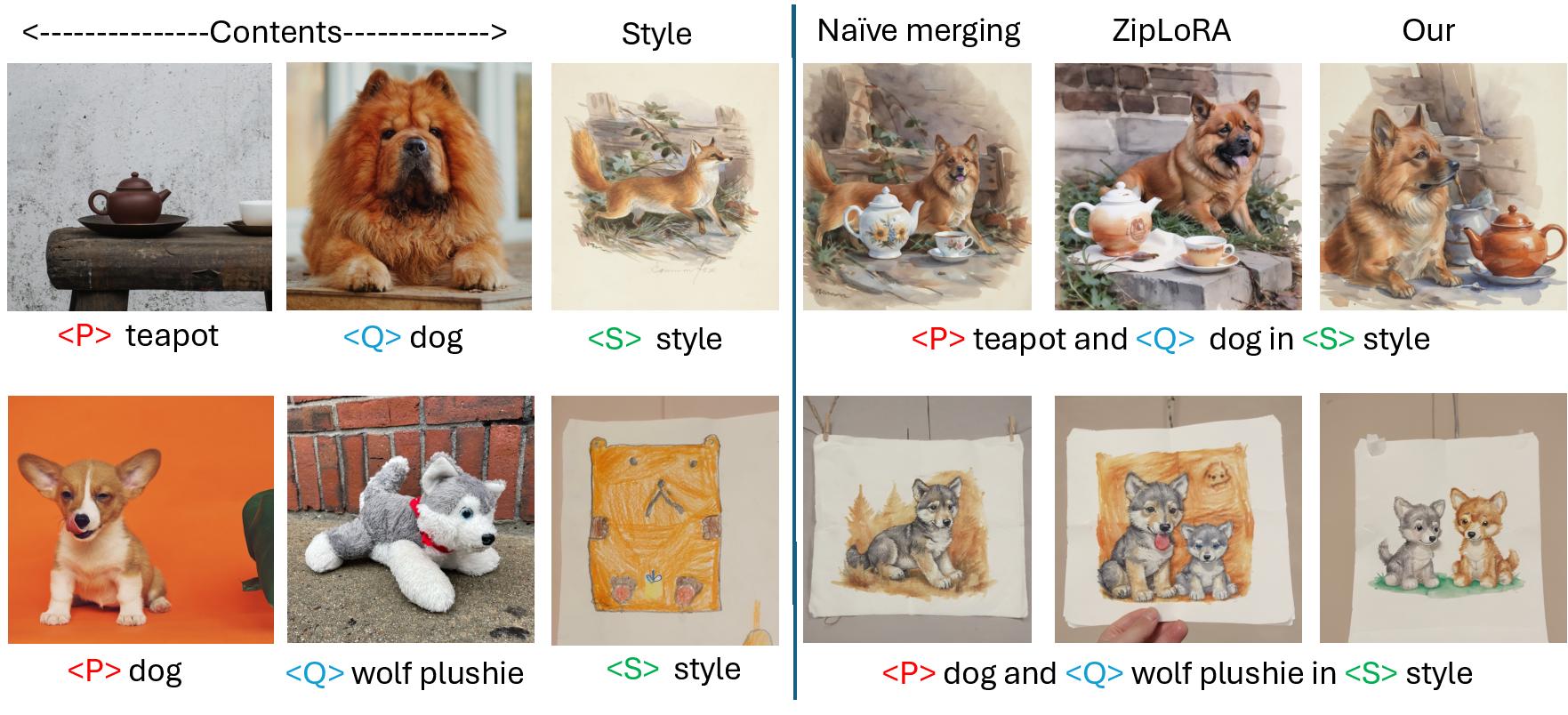} 
        \vspace{-0.8cm}
        \caption{Multi-concept stylization comparison (2 concepts).}
        \vspace{-0.2cm}
        \label{fig:multi_concept_2}   
\end{figure}

\noindent\textbf{Multi-concept stylization.}
The multi-concept stylization results are visualized in Fig.~\ref{fig:multi_concept_2},\ref{fig:multi_concept_4}, where our method shows qualitative improvements over the baselines in preserving both content and style details. For instance, as shown in Fig.~\ref{fig:multi_concept_2}, naive merging often loses details of either content, style, or both. Similarly, ZipLoRA struggles to retain content details, while our method successfully captures and maintains both content and style elements. We combine 2, 3, 4 concepts from Dreambooth dataset and styles from StyleDrop dataset and the results are shown in Tab.~\ref{tab:multi_concept}, Fig.~\ref{fig:multi_concept_2} and Fig.~\ref{fig:multi_concept_4}. Details and more results are in the supplementary.

\begin{table}
\centering
\vspace{-0.1cm}
\captionsetup{font=footnotesize}
\caption{\footnotesize Multi-concept comparison}
\vspace{-0.3cm}
\scalebox{0.55}{
\begin{tabular}{l|ccc|ccc|ccc}
\hline
\textbf{Method} & \multicolumn{3}{c|}{\textbf{2-concepts}} & \multicolumn{3}{c|}{\textbf{3-concepts}} & \multicolumn{3}{c}{\textbf{4-concepts}} \\
 & \textbf{DINO} & \textbf{CLIP-I} & \textbf{CSD-s} & \textbf{DINO} & \textbf{CLIP-I} & \textbf{CSD-s} & \textbf{DINO} & \textbf{CLIP-I} & \textbf{CSD-s}\\
\hline
Naïve Merging                 & 0.38 & 0.63 & 0.35 & 0.35 & 0.56 & 0.30 & 0.28 & 0.55 & 0.25\\
ZipLoRA   & 0.40 & 0.64 & 0.42 & 0.38 & 0.60 & 0.34 & 0.29 & 0.58 & 0.31\\
\rowcolor{gray!20} DuoLoRA                  & 0.45 & 0.66 & 0.47 & 0.40 & 0.64 & 0.39 & 0.32 & 0.63 & 0.35\\
\hline
\end{tabular}}
\label{tab:multi_concept}
\end{table}

\noindent\textbf{Other diffusion models.} 
We evaluate DuoLoRA on the SSD-1B~\cite{gupta2024progressive} and Segmind-Vega~\cite{gupta2024progressive} architectures, demonstrating its generalizability across architectures (Tab.~\ref{tab:arch_ablation}).

\noindent\textbf{Runtime and storage.}
We compare runtime and extra storage required for the mergers in DuoLoRA vs baselines in Tab.~\ref{tab:baseline_comparison} on an NVIDIA A6000 (24GB RAM). Our method slightly increase training time while reducing extra storage requirements compared to the baselines.

\noindent\textbf{User study.} 
Since perceptual metrics are not always reliable, we also conducted a human preference study using Amazon Mechanical Turk (AMT) for assesing the content-style alignment.
We asked 50 unbiased users rank our method against baselines (i.e., ``Naive merging'', ``B-LoRA'', ``ZipLoRA'', ``DuoLoRA'', ``None is satisfactory''), totaling 1000 questionnaires. The aggregate responses in Tab.~\ref{table:user_study} show that DuoLoRA generated images significantly outperformed the baselines by a large margin (50\%). Further details are provided in the Supplementary.

\begin{figure}
    \centering
    \vspace{-0.3cm}
    \includegraphics[scale=0.18]{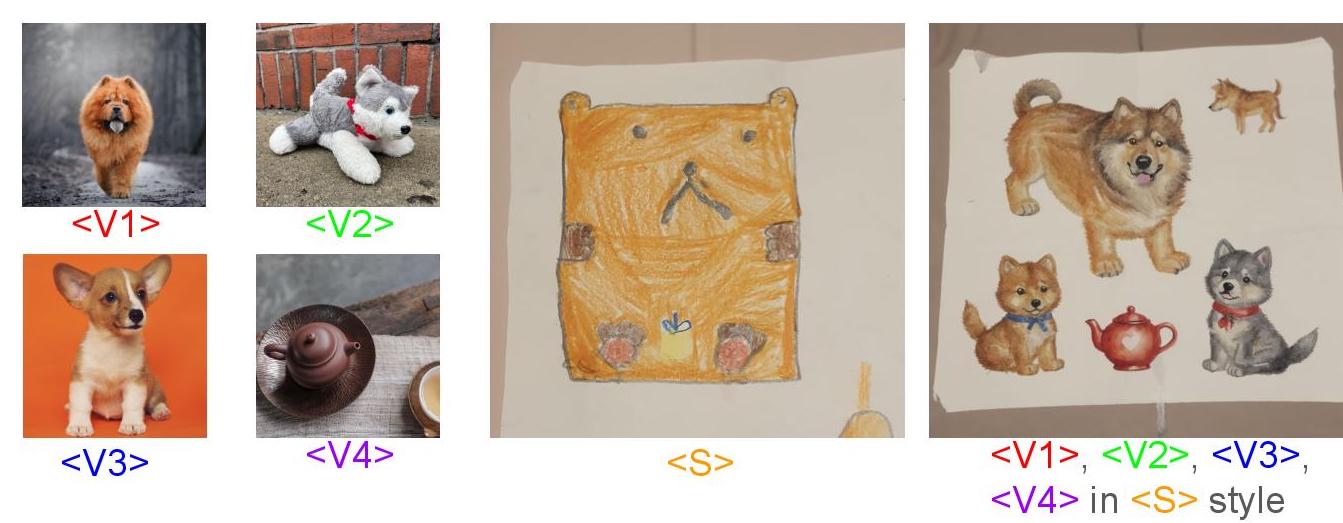}
    \vspace{-0.8cm}
    \captionsetup{font=footnotesize}
    \caption{\footnotesize{Multi-concept stylization (4 concepts).}}
    \label{fig:multi_concept_4}
    \vspace{-0.1cm}
\end{figure}

\begin{table}
\centering
\vspace{-0.2cm}
\captionsetup{font=footnotesize}
\caption{\footnotesize Comparison w.r.t architectures}
\vspace{-0.3cm}
\scalebox{0.55}{
\begin{tabular}{l|cccc|cccc}
\hline
\textbf{Method} & \multicolumn{4}{c|}{\textbf{SSD-1B}} & \multicolumn{4}{c}{\textbf{Segmind-Vega}}  \\
 & \textbf{DINO} & \textbf{CLIP-I} & \textbf{CSD-s} &  \textbf{\#params(M)} & \textbf{DINO} & \textbf{CLIP-I} & \textbf{CSD-s} & \textbf{\#params(M)}\\
\hline
Naïve Merging                 & 0.35 & 0.55 & 0.38 & - & 0.33 & 0.57 & 0.42 & -\\
ZipLoRA   & 0.42 & 0.65 & 0.46 & 0.61 & 0.40 & 0.67 & 0.50 & 0.26 \\
\rowcolor{gray!20} DuoLoRA                  & 0.44 & 0.71 & 0.52 & 0.03 & 0.41 & 0.72 & 0.56 & 0.02 \\
\hline
\end{tabular}}
\vspace{-0.5cm}
\label{tab:arch_ablation}
\end{table}

\begin{table}[!h]
\captionsetup{font=footnotesize}
  \small
  \centering
  \vspace{-0.2cm}
  \caption{\small{User study}}
  \vspace{-0.3cm}
  \scalebox{0.75}{
  \begin{tabular}{lccccc}
  \toprule
  {\bf None} & {\bf Naive merging} & {\bf B-LoRA} & {\bf ZipLoRA} & {\bf DuoLoRA} \\ 
  \midrule
  {0.0\%} & {10\%} & {18\%} & {22\%} & {50\%} \\
  \bottomrule
  \end{tabular}}
  \vspace{-0.2cm}
  \label{table:user_study}
\end{table}

\begin{table}[h]
\captionsetup{font=footnotesize}
\centering
\vspace{-0.1cm}
\caption{Ablation of content and style threshold ($T_{content}$, $T_{style}$)}
\vspace{-0.3cm}
\scalebox{0.6}{
\begin{tabular}{lcccccc}
\hline
\textbf{Method} & \textbf{$\lambda_{layer\_prior}$} & \(\mathbf{T_{content}}\) & \(\mathbf{T_{style}}\) & \textbf{DINO} & \textbf{CLIP-I}  & \textbf{CSD-s} \\
\hline
ZipRank                 & 0.1  & -   & -   & 0.53 & 0.64 & 0.42 \\
ZipRank + Init          & 0.1  & 0.1 & 0.0 & 0.56 & 0.65 & 0.43 \\
ZipRank + Init          & 0    & 0.75 & 0.50 & 0.57 & 0.64 & 0.41 \\
ZipRank + Init          & 0    & 1   & 0.75 & 0.53 & 0.64 & 0.42 \\
ZipRank + Init          & 0.1  & 0.75 & 0.50 & 0.50 & 0.64  & 0.44 \\
ZipRank + Init          & 0.1  & 1   & 0.75 & 0.47 & 0.64 & 0.43 \\
\hline
\end{tabular}}
\vspace{-0.2cm}
\label{tab:init_thres_ablation}
\end{table}

\begin{table}[h]
\vspace{-0.1cm}
\captionsetup{font=footnotesize}
\centering
\caption{Ablations of components}
\vspace{-0.3cm}
\scalebox{0.5}{
\begin{tabular}{lccccccccc}
\toprule
\textbf{ZipRank} & \textbf{Merger} & \textbf{Nuclear Norm} & \textbf{Sparsity} & \textbf{Cycle Loss} & \textbf{Cycle Loss} & \textbf{DINO} & \textbf{CLIP-I} & \textbf{CSD-s}  \\
{} & \textbf{Initialization} & \textbf{loss} & \textbf{loss} & \textbf{(Style)} & \textbf{(Content+Style)} & {} & {} & {}  \\
\midrule
{\ding{51}} & {\ding{55}} & {\ding{55}} & {\ding{55}} & {\ding{55}} & {\ding{55}} & 0.530 & 0.647 & 0.424  \\
{\ding{51}} & {\ding{51}} & {\ding{55}} & {\ding{55}} & {\ding{55}} & {\ding{55}} & 0.560 & 0.652 & 0.435  \\
{\ding{51}} & {\ding{51}} & {\ding{51}} & {\ding{55}} & {\ding{55}} & {\ding{55}} & 0.577 & 0.637 & 0.400  \\
{\ding{51}} & {\ding{51}} & {\ding{51}} & {\ding{55}} & {\ding{55}} & {\ding{55}} & 0.516 & 0.679 & 0.476  \\
{\ding{51}} & {\ding{51}} & {\ding{51}} & {\ding{51}} & {\ding{55}} & {\ding{55}} & 0.542 & 0.670 & 0.458 \\
{\ding{51}} & {\ding{51}} & {\ding{51}} & {\ding{51}} & {\ding{51}} & {\ding{55}} & 0.507 & 0.704 & 0.519 \\
{\ding{51}} & {\ding{51}} & {\ding{51}} & {\ding{51}} & {\ding{51}} & {\ding{51}} & 0.560 & 0.693 & 0.482 \\
\bottomrule
\end{tabular}}
\vspace{-0.2cm}
\label{tab:loss_ablation}
\end{table}

\begin{table}[ht]
\captionsetup{font=footnotesize}
\vspace{-0.1cm}
\caption{\footnotesize Ablation of parameters}
\vspace{-0.3cm}
\centering
\scalebox{0.5}{
\begin{tabular}{lccccccccc}
\hline
\textbf{Parameters} ($\lambda_{layer\_prior}$/$\lambda_{cycle}$) & 0/0 & 0/1 & 0/0.01 & 1/0 & 0.1/0 & 0.1/0.1 & 0.01/0.01 & 0.1/0.01 & 1.0/0.01 \\
\hline
\textbf{DINO}   & 0.53 & 0.45 & 0.47 & 0.57 & 0.55 & 0.52 & 0.48 & 0.51 & 0.56 \\
\textbf{CLIP-I} & 0.64 & 0.70 & 0.68 & 0.63 & 0.64 & 0.64 & 0.65 & 0.66 & 0.69\\
\textbf{CSD-s}  & 0.42 & 0.52 & 0.48 & 0.40 & 0.42 & 0.43 & 0.45 & 0.46 & 0.48\\
\hline
\end{tabular}}
\vspace{-0.3cm}
\label{tab:hyperparam_sensitivity}
\end{table}

\noindent\textbf{Ablations.}
We perform ablations on the loss and initialization when applied to content (Dreambooth dataset) and style (StyleDrop dataset) in Tab.~\ref{tab:loss_ablation}. Ablation of initialization parameters ($T_{content}$ and $T_{style}$) is also provided in Tab.~\ref{tab:init_thres_ablation}. We also present a hyperparameter sensitivity analysis on Dreambooth-StyleDrop in Tab.~\ref{tab:hyperparam_sensitivity} w.r.t loss coefficients ($\lambda_{layer\_prior}, \lambda_{cycle}$). With same number of parameters compared to baseline (ZipLoRA), DuoLoRA obtains better performance gains in Tab.~\ref{tab:compare_ziplora_lp}. More ablations are provided in the supplementary.

\noindent\textbf{Limitations.} 
At present, our approach handles only two concepts simultaneously, limiting its applicability for merging multiple concepts jointly. We aim to address this limitation in future work.

\vspace{-0.2cm}
\section{Conclusion}
\vspace{-0.2cm}
\label{sec:conclusion}
We investigate content and style personalization via LoRA merging, introducing DuoLoRA—a content-style personalization framework with three core components: (1) learning a mask in the rank dimension, (2) merging informed by layer priors, and (3) Constyle loss that leverages cycle-consistency between content and style. By learning the mask in the rank dimension, DuoLoRA enables adaptive rank flexibility with a significant reduction in trainable parameters (19x fewer). To further refine the merging process, we apply explicit rank constraints informed by layer priors and adaptive initialization. Additionally, we introduce Constyle loss, which uses content-style cycle-consistency to enhance merging. Experiments across multiple benchmarks show that our approach outperforms state-of-the-art methods.
\vspace{-0.2cm}
\section{Acknowledgement}
\vspace{-0.2cm}
Aniket Roy and RC acknowledge support through a fellowship from JHU + Amazon Initiative for Interactive AI (AI2AI) and ONR MURI grant N00014-20-1-2787. 
{
\small
\bibliographystyle{ieeenat_fullname}
\bibliography{main}
}
\clearpage
\setcounter{page}{1}
\maketitlesupplementary

In this supplementary material, we will provide the following details. 
\begin{enumerate}
    \item Training details.
    \item Theoretical analysis.
    \item Algorithm details.
    \item Details of user study.
    \item Additional results and ablations.
\end{enumerate}

\section{Training details.}
We have provided the hyperparameters for each of the datasets, i.e., Dreambooth + SyleDrop, Subjectplop, Subjectplop + SyleDrop and Custom101 + SyleDrop in Table.~\ref{tab:db_sd_hparams}, Table.~\ref{tab:sp_hparams}, Table.~\ref{tab:sp_sd_hparams} and Table.~\ref{tab:custom_hparams} respectively. 
For each of the dataset, the content and style images are provided in the supplementary as attachment.

For the joint training baseline, we are using Dreambooth~\cite{ruiz2023dreambooth} style joint training on SDXL with learning rate of 5.e-6. Training for 500 steps across all Unet parameters on a resolution of 768. 

\begin{table}[!h]
    \centering
    \caption{Hyperparameters for Dreambooth + SyleDrop}
    \scalebox{1.0}{
    \begin{tabular}{cc}
    \toprule
    Hyperparameter & Values \\
    \midrule
    $\lambda_{layer\_prior}$ & 0.1 \\
    $\lambda_{cycle}$ & 0.01 \\
    Base diffusion model & SDXL v1.0 \\
    LoRA rank & 64 \\
    Learning rate of LoRA & $5e^{-5}$\\
    Learning rate of mergers & $0.001$\\
    Batch size & 1\\
    resolution & 1024 \\
    $T_{content}$ & 0.1\\
    $T_{style}$ & 0.0 \\
    \bottomrule
    \end{tabular}}
    \label{tab:db_sd_hparams}
\end{table}

\begin{table}[!h]
    \centering
    \caption{Hyperparameters for Subjectplop}
    \scalebox{1.0}{
    \begin{tabular}{cc}
    \toprule
    Hyperparameter & Values \\
    \midrule
    $\lambda_{layer\_prior}$ & 0.01 \\
    $\lambda_{cycle}$ & 0.01 \\
    Base diffusion model & SDXL v1.0 \\
    LoRA rank & 64 \\
    Learning rate of LoRA & $5e^{-5}$\\
    Learning rate of mergers & $0.001$\\
    Batch size & 1\\
    resolution & 1024 \\
    $T_{content}$ & 0.75\\
    $T_{style}$ & 0.5 \\
    \bottomrule
    \end{tabular}}
    \label{tab:sp_hparams}
\end{table}

\begin{table}[!h]
    \centering
    \caption{Hyperparameters for Subjectplop + SyleDrop}
    \scalebox{1.0}{
    \begin{tabular}{cc}
    \toprule
    Hyperparameter & Values \\
    \midrule
    $\lambda_{layer\_prior}$ & 0.1 \\
    $\lambda_{cycle}$ & 0.01 \\
    Base diffusion model & SDXL v1.0 \\
    LoRA rank & 64 \\
    Learning rate of LoRA & $5e^{-5}$\\
    Learning rate of mergers & $0.001$\\
    Batch size & 1\\
    resolution & 1024 \\
    $T_{content}$ & 0.1\\
    $T_{style}$ & 0.0 \\
    \bottomrule
    \end{tabular}}
    \label{tab:sp_sd_hparams}
\end{table}

\begin{table}[!h]
    \centering
    \caption{Hyperparameters for Custom101 + SyleDrop}
    \scalebox{1.0}{
    \begin{tabular}{cc}
    \toprule
    Hyperparameter & Values \\
    \midrule
    $\lambda_{layer\_prior}$ & 0.1 \\
    $\lambda_{cycle}$ & 0.01 \\
    Base diffusion model & SDXL v1.0 \\
    LoRA rank & 64 \\
    Learning rate of LoRA & $5e^{-5}$\\
    Learning rate of mergers & $0.001$\\
    Batch size & 1\\
    resolution & 1024 \\
    $T_{content}$ & 0.1\\
    $T_{style}$ & 0.0 \\
    \bottomrule
    \end{tabular}}
    \label{tab:custom_hparams}
\end{table}


\section{Theoretical Analysis}
\label{sec:theory}

In this section, we provide the proof the theoretical results provided in the main paper.

\begin{theorem} In Low-Rank Adaptation (LoRA) merging, under the same parameter budget, the approximation error resulting from rank dimension masking is less than or equal to that from output dimension masking. Formally,
\[
E_{\text{rank}} \leq E_{\text{out}},
\]
where:
\[
E_{\text{rank}} = \| X - \Delta W_{\text{rank}} \|_F
\]
is the approximation error using rank dimension masking, and
\[
E_{\text{out}} = \| X - \Delta W_{\text{out}} \|_F
\]
is the approximation error using output dimension masking.
\end{theorem}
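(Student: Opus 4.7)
The plan is to interpret $X$ as the target that the low-rank update is trying to approximate, namely $X = \Delta W = U_r \Sigma_r V_r^\top$, and to take ``same parameter budget'' to mean that the number of active rank components $s$ equals the number of active output units $d_s$, so that both masked approximations have rank at most $s$. Under this reading the inequality becomes a statement comparing two rank-$s$ approximations to the same target, which is exactly the setting of the Eckart--Young--Mirsky theorem.

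First I would compute $E_{\text{rank}}$ explicitly. Writing $\Delta W_{\text{rank}} = U_r M_r \Sigma_r V_r^\top$ and using the orthonormality of the columns of $U_r$ and $V_r$, we get
\[
E_{\text{rank}}^2 = \|U_r(I - M_r)\Sigma_r V_r^\top\|_F^2 = \sum_{i \notin S} \sigma_i^2.
\]
By choosing $S$ to be the indices of the $s$ largest singular values of $\Delta W$, this quantity is minimized to $\sum_{i=s+1}^r \sigma_i^2$ (with singular values sorted in decreasing order), and the resulting $\Delta W_{\text{rank}}$ is precisely the truncated SVD of $\Delta W$ at rank $s$. Next I would handle the output-masked case by observing that $\Delta W_{\text{out}} = M_{\text{out}} \Delta W$ is a matrix with $d_{\text{out}} - d_s$ zero rows, and therefore has rank at most $\min(d_s, r) \leq s$. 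By Eckart--Young--Mirsky, any matrix of rank at most $s$ has Frobenius-norm approximation error to $\Delta W$ bounded below by $\sqrt{\sum_{i=s+1}^r \sigma_i^2}$, so
\[
E_{\text{out}} = \|\Delta W - \Delta W_{\text{out}}\|_F \geq \sqrt{\sum_{i=s+1}^r \sigma_i^2} = E_{\text{rank}}^{\star},
\]
and combining the two bounds yields $E_{\text{rank}} \leq E_{\text{out}}$.

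The main obstacle I foresee is pinning down precisely what ``same parameter budget'' means, since the output mask has $d_{\text{out}}$ binary entries while the rank mask has only $r$, and what the argument really needs is equality of the effective ranks $d_s = s$. A secondary subtlety is that the inequality relies on choosing the rank mask optimally (retaining the top-$s$ singular components); for a generic rank mask one only obtains a best-case-rank versus any-output-mask comparison, a caveat worth flagging when the proof is written out. Apart from these interpretive points, the argument is a direct application of Eckart--Young plus the observation that rank masking has the capacity to realize the truncated SVD exactly.
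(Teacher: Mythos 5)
Your argument is internally sound and, where it applies, is cleaner than the paper's: once you observe that $\Delta W_{\text{out}} = M_{\text{out}}\,\Delta W$ has rank at most $\min(d_s, r)$ while the optimal rank mask realizes the truncated SVD at rank $s$, Eckart--Young--Mirsky finishes the job whenever $s \geq \min(d_s, r)$. The computation $E_{\text{rank}}^2 = \sum_{i \notin S}\sigma_i^2$ from orthonormality of the columns of $U_r$ and $V_r$ is also exactly what the paper uses (though note the paper treats $X$ as a general target whose best rank-$r$ approximation is $U_r\Sigma_r V_r^\top$, so it carries an extra tail term $\sum_{i>r}\sigma_i^2$ that vanishes under your choice $X=\Delta W$).

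The gap is precisely in the step you flagged as the main obstacle: the paper does not define ``same parameter budget'' as $s = d_s$. It counts stored parameters, $P_{\text{rank}} = s(d_{\text{out}}+d_{\text{in}})$ versus $P_{\text{out}} = d_s r + r d_{\text{in}}$, and with $d_{\text{out}}=d_{\text{in}}=d$ equates them to obtain $s = \frac{r(d_s+d)}{2d} = r\left(1-\frac{f}{2}\right)$, where $f = (d-d_s)/d$. Under this budget your Eckart--Young route closes only when $s \geq \min(d_s,r)$, which holds roughly for $d_s \lesssim \frac{rd}{2d-r}\approx r/2$ or for $d_s=d$, but fails in between: e.g.\ $d=1024$, $r=64$, $d_s=512$ gives $s=48$, so the output-masked matrix may have rank up to $r=64>s$ and the inequality is no longer a corollary of Eckart--Young. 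Indeed, splitting $E_{\text{out}}^2$ over kept and masked rows shows that if the masked rows of $X$ carry little energy, $E_{\text{out}}^2$ can drop to $\sum_{i>r}\sigma_i^2 < \sum_{i>s}\sigma_i^2 = E_{\text{rank}}^2$, so in that regime the claim cannot be recovered by a pure rank argument for an arbitrary admissible mask. The paper bridges this regime differently: it asserts that masking a fraction $f$ of the rows of $U_r$ removes a fraction $f$ of the energy of the top-$r$ component, giving the lower bound $E_{\text{out}}^2 \geq f\sum_{i\leq r}\sigma_i^2 + \sum_{i>r}\sigma_i^2$, and then compares this with $E_{\text{rank}}^2$ algebraically using $f = 2(1-s/r)$. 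That energy bound itself holds only ``on average'' over rows, so your instinct that the budget accounting is where the whole theorem lives is well placed --- but to match the paper you would need to adopt its budget equation and then reproduce (or repair) its average-energy lower bound on $E_{\text{out}}$, rather than substituting $s=d_s$ and appealing to Eckart--Young alone.
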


\begin{proof}
To compare the two masking strategies fairly, we ensure that both use the same number of parameters.
Parameter Count for Rank Dimension Masking, 
\[
P_{\text{rank}} = s (d_{\text{out}} + d_{\text{in}}).
\]
Parameter count for output dimension masking is,
\[
P_{\text{out}} = d_s \times r + r \times d_{\text{in}} = r (d_s + d_{\text{in}}).
\]

Now, Setting Equal Parameter Budgets,
Set $P_{\text{rank}} = P_{\text{out}}$:
\[
s (d_{\text{out}} + d_{\text{in}}) = r (d_s + d_{\text{in}}).
\]
Assuming $d_{\text{out}} = d_{\text{in}} = d$ for simplicity:
\[
s (2d) = r (d_s + d).
\]
Solving for $s$:
\[
s = \frac{r (d_s + d)}{2d}.
\]

Next, we compare the Approximation Errors.

Since we retain the $s$ largest singular values to minimize the error, the approximation error for Rank Dimension Masking is:
\[
E_{\text{rank}} = \left( \sum_{i = s+1}^{p} \sigma_i^2 \right)^{1/2}.
\]

The exact computation of Output Dimension Masking error ($E_{\text{out}}$) is complex due to the loss of orthogonality in $U_r$ caused by masking. However, we can establish a lower bound.

The total energy (sum of squares) in $U_r$ is:
\[
\| U_r \|_F^2 = \operatorname{trace}(U_r^\top U_r) = r.
\]
where each row of $U_r$ contributes equally on average to this total energy.

The fraction of rows masked out (i.e., energy removed by masking) is:
\[
f = \frac{d - d_s}{d}.
\]
Therefore, the approximate fraction of energy removed is $f$.

Next, we get the lower bound on Approximation Error, i.e., the loss in the approximation due to output dimension masking is at least:
\[
E_{\text{out}}^2 \geq f \sum_{i=1}^{r} \sigma_i^2 + \sum_{i = r+1}^{p} \sigma_i^2.
\]

The first term $f \sum_{i=1}^{r} \sigma_i^2$ represents the loss from masking out a fraction $f$ of the energy from the top $r$ singular values. The second term $\sum_{i = r+1}^{p} \sigma_i^2$ accounts for the singular values beyond rank $r$.

Now, Relating $s$ and $f$, from the parameter equality:
\[
s = \frac{r (d_s + d)}{2d} = \frac{r}{2} \left( 1 + \frac{d_s}{d} \right).
\]
Since $d_s = d (1 - f)$:

\begin{align*}
s &= \frac{r}{2} \left( 1 + \frac{d_s}{d} \right) \\
  &= \frac{r}{2} \left( 1 + (1 - f) \right) \\
  &= \frac{r}{2} (2 - f) \\
  &= r \left( 1 - \frac{f}{2} \right).
\end{align*}

Thus,
\[
\frac{s}{r} = 1 - \frac{f}{2}.
\]

We observe, in Rank Dimension Masking :
The approximation error comes from the discarded smaller singular values (indices $i > s$).
Since $s = r \left( 1 - \frac{f}{2} \right)$, we discard the smallest $r - s = r \left( \frac{f}{2} \right)$ singular values among the top $r$.
\[
E_{\text{rank}}^2 = \sum_{i=s+1}^{p} \sigma_i^2 = \sum_{i=r\left(1 - \frac{f}{2}\right)+1}^{p} \sigma_i^2.
\]

In Output Dimension Masking:
The error includes a loss from the largest singular values, scaled by $f$, because masking affects all components equally.
\[
E_{\text{out}}^2 \geq f \sum_{i=1}^{r} \sigma_i^2 + \sum_{i=r+1}^{p} \sigma_i^2.
\]

Total Energy of Top $r$ Singular Values
\[
q = \sum_{i=1}^{r} \sigma_i^2.
\]

Sum of Discarded Singular Values in Rank Masking:
\[
q' = \sum_{i = s+1}^{r} \sigma_i^2 = q - \sum_{i=1}^{s} \sigma_i^2.
\]

Relation between $f$ and $\frac{s}{r}$:
\[
f = 2 \left( 1 - \frac{s}{r} \right).
\]

Now, Expressing $E_{\text{rank}}^2$ and $E_{\text{out}}^2$ in terms of $q$ and $q'$, we get the Rank Dimension Masking Error:
\[
E_{\text{rank}}^2 = q' + \sum_{i = r+1}^{p} \sigma_i^2.
\]

Output Dimension Masking Error Lower Bound:
\[
E_{\text{out}}^2 \geq f q + \sum_{i = r+1}^{p} \sigma_i^2.
\]

Since $q' = q - \sum_{i=1}^{s} \sigma_i^2$ and $s = r \left( 1 - \frac{f}{2} \right)$, we have:

\begin{align*}
q' &= q - \sum_{i=1}^{s} \sigma_i^2 \\
   &\leq q - s \left( \frac{\sigma_r^2}{r} \cdot r \right) \quad (\text{since } \sigma_i \geq \sigma_r) \\
   &= q - s \sigma_r^2 \\
   &= q - r \left( 1 - \frac{f}{2} \right) \sigma_r^2.
\end{align*}

Therefore,
\[
E_{\text{rank}}^2 \leq q - r \left( 1 - \frac{f}{2} \right) \sigma_r^2 + \sum_{i = r+1}^{p} \sigma_i^2.
\]

Comparing with $E_{\text{out}}^2$, we get,
\[
E_{\text{out}}^2 \geq f q + \sum_{i = r+1}^{p} \sigma_i^2.
\]

Since $f = 2 \left( 1 - \frac{s}{r} \right)$, we have:

\[
E_{\text{out}}^2 \geq 2 \left( 1 - \frac{s}{r} \right) q + \sum_{i = r+1}^{p} \sigma_i^2.
\]

Therefore, the difference between $E_{\text{out}}^2$ and $E_{\text{rank}}^2$ is:


\begin{align*}
E_{\text{out}}^2 - E_{\text{rank}}^2 & \geq \left[ 2 \left( 1 - \frac{s}{r} \right) q - \left( q - r \left( 1 - \frac{f}{2} \right) \sigma_r^2 \right) \right] \\
   &= \left( 1 - \frac{s}{r} \right) \left( q + r \sigma_r^2 \right).
\end{align*}


Since $q \geq r \sigma_r^2$, the difference is non-negative, implying:

\[
E_{\text{rank}}^2 \leq E_{\text{out}}^2.
\]

\end{proof}









\begin{lemma}
\label{Lemma : lemma_1 }
Let \( m_c \in \mathbb{R}^{m \times n} \) be a matrix representing the content merger and \( m_s \in \mathbb{R}^{m \times n} \) be a matrix representing the style merger. The problem of minimizing the \( L_1 \)-norm of \( m_c \) subject to a rank constraint on \( m_c \) can be written as:
\[
\min \|m_c\|_1 \quad \text{subject to} \quad \text{rank}(m_c) > \text{rank}(m_s)
\]
This problem is non-convex due to the rank constraint. A convex relaxation can be achieved by approximating the rank of a matrix using the nuclear norm \( \| \cdot \|_* \), which is the sum of the singular values of the matrix. Thus, the original problem can be relaxed to:
\[
\min \|m_c\|_1 \quad \text{subject to} \quad \|m_c\|_* > \|m_s\|_*
\]
where \( \|m_c\|_* \) denotes the nuclear norm of \( m_c \), and \( \|m_s\|_* \) is the nuclear norm of \( m_s \).

This relaxed problem can be approached via a Lagrangian penalty formulation:
\[
\mathcal{L}(m_c, m_s, \lambda) = \|m_c\|_1 + \lambda \max(0, \|m_s\|_* - \|m_c\|_*)
\]
for some penalty parameter \( \lambda \geq 0 \), which enforces the constraint \( \|m_c\|_* > \|m_s\|_* \) in the limit as \( \lambda \to \infty \).
\end{lemma}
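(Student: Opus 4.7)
\textbf{Proof plan for Lemma~\ref{Lemma:lemma_1}.}
The plan is to decompose the statement into three logically independent pieces: (i) the non-convexity of the original program, (ii) the convex relaxation via nuclear norm, and (iii) the Lagrangian penalty reformulation with its limiting behavior. For (i), I would first note that the feasible set $\{m_c : \mathrm{rank}(m_c) > \mathrm{rank}(m_s)\}$ is not convex, since for any two matrices $A$, $B$ of equal rank strictly above $\mathrm{rank}(m_s)$, a convex combination may land on a matrix whose rank drops (rank is lower semicontinuous but not convex as a function on $\mathbb{R}^{m \times n}$). This, together with the non-convexity and discontinuity of $\mathrm{rank}(\cdot)$ itself, immediately shows that the raw problem is non-convex, even though the objective $\|m_c\|_1$ is convex.

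For (ii), I would invoke the classical result (Fazel, 2002) that the nuclear norm is the convex envelope of the rank function on the spectral-norm unit ball: $\mathrm{rank}(M) = \|\sigma(M)\|_0$ while $\|M\|_* = \|\sigma(M)\|_1$, so replacing $\mathrm{rank}$ by $\|\cdot\|_*$ is exactly the standard $\ell_0 \to \ell_1$ relaxation applied to singular values. Substituting this relaxation into the constraint yields $\|m_c\|_* > \|m_s\|_*$, which is a convex constraint in $m_c$ (for fixed $m_s$), since $\|\cdot\|_*$ is convex and the superlevel set of a convex function defined by a strict inequality is, after closure to a non-strict inequality, the complement of a convex sublevel set. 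I would address the strict-vs-non-strict inequality by noting that in practice one works with the closed relaxation $\|m_c\|_* \ge \|m_s\|_*$, yielding a bona fide convex program.

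For (iii), I would invoke the standard exact penalty construction for inequality-constrained convex optimization. Writing the constraint as $g(m_c) := \|m_s\|_* - \|m_c\|_* \le 0$, the Lagrangian penalty with a hinge $\max(0, g(m_c))$ is the canonical $\ell_1$ exact penalty for inequality constraints, giving
\[
\mathcal{L}(m_c, m_s, \lambda) = \|m_c\|_1 + \lambda \max\bigl(0, \|m_s\|_* - \|m_c\|_*\bigr).
\]
Under a Slater-type feasibility assumption on the relaxed problem, the classical penalty-method argument (cf. Bertsekas, \emph{Nonlinear Programming}, Prop.~5.4.5) guarantees that for all $\lambda$ larger than a threshold $\lambda^* \ge 0$ related to the optimal dual multiplier, every minimizer of $\mathcal{L}$ solves the constrained relaxed problem. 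Passing $\lambda \to \infty$ therefore drives any minimizer into the feasible region $\|m_c\|_* \ge \|m_s\|_*$ and recovers the relaxed optimum.

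\textbf{Main obstacle.} The hard part will not be the convex-envelope step, which is textbook, but rigorously handling the limit $\lambda \to \infty$. Two subtleties arise: first, the constraint is written as a strict inequality, whereas penalty methods naturally handle the non-strict version, so one must either pass to the closure or argue that the strict-inequality infimum is attained by a non-strict-feasible limit; second, the existence of a Slater-feasible point (a matrix with $\|m_c\|_* > \|m_s\|_*$) must be asserted to invoke strong duality and thus the finiteness of the exact-penalty threshold. I would handle both by remarking that the feasible set of the relaxed problem has non-empty interior whenever $\|m_s\|_* < \infty$ (take $m_c = c \cdot I$ for $c$ large), and then citing the standard exact-penalty theorem to conclude the proof.
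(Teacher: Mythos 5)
Your proposal contains a genuine error at step (ii) that the paper itself avoids. You claim that the relaxed constraint $\|m_c\|_* > \|m_s\|_*$ (or its closure $\|m_c\|_* \ge \|m_s\|_*$) is convex because it is ``the complement of a convex sublevel set.'' But the complement of a convex set is in general \emph{not} convex: the feasible region here is a \emph{superlevel} set of the convex function $\|\cdot\|_*$, i.e.\ the exterior of a nuclear-norm ball, and this set is non-convex whenever $\|m_s\|_* > 0$ (take two feasible matrices $M$ and $-M$ with large nuclear norm; their midpoint is $0$, which is infeasible). The paper's own proof explicitly concedes this point --- it states that ``the difference $\|m_s\|_* - \|m_c\|_*$ is not convex'' and that ``directly enforcing $\|m_c\|_* > \|m_s\|_*$ would introduce non-convexity back into the problem'' --- and for that reason it only claims the hinge penalty \emph{gradually} enforces the constraint as $\lambda \to \infty$, treating it as a heuristic surrogate. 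Because the relaxed program is not convex, your entire step (iii) collapses: the Slater condition, strong duality, and the exact-penalty theorem you cite (Bertsekas, Prop.~5.4.5) are all results about convex programs and do not apply here. The penalized objective $\|m_c\|_1 + \lambda\max(0, \|m_s\|_* - \|m_c\|_*)$ is itself non-convex (the penalty term is the positive part of a concave function), so no finite exact-penalty threshold is guaranteed, and minimizers of the penalized problem need not converge to solutions of the constrained one without further argument.

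A secondary remark: the convex-envelope property of the nuclear norm justifies replacing rank by $\|\cdot\|_*$ when one \emph{minimizes} rank or imposes an \emph{upper} bound on it; here the constraint is a \emph{lower} bound, $\mathrm{rank}(m_c) > \mathrm{rank}(m_s)$, and neither inequality between nuclear norms implies nor is implied by the corresponding rank inequality. The paper makes this same leap, so it is not a point of divergence, but if you want your write-up to be rigorous you should present the nuclear-norm substitution as a modeling surrogate rather than a relaxation with any containment guarantee, and you should drop the claim of convexity together with the appeal to exact-penalty theory, replacing it with the weaker (and honest) statement that the hinge penalty vanishes exactly on the target set and grows with the violation, so that constraint violation is increasingly discouraged as $\lambda$ grows.
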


\begin{proof}
The rank of a matrix \( m_c \) is a non-convex function, making it difficult to optimize directly. The nuclear norm \( \|m_c\|_* \), defined as the sum of the singular values of \( m_c \), provides a \textit{convex envelope} of the rank function over the unit ball of matrices in the operator norm. Minimizing the nuclear norm encourages low-rank solutions because the nuclear norm penalizes the magnitude of singular values, making it an effective surrogate for the rank.

Thus, we replace the rank constraint \( \text{rank}(m_c) > \text{rank}(m_s) \) with the nuclear norm constraint:

\[
\|m_c\|_* > \|m_s\|_*
\]
This converts the non-convex constraint into a convex inequality that we can handle more easily in optimization.

Since the difference \( \|m_s\|_* - \|m_c\|_* \) is not convex, directly enforcing \( \|m_c\|_* > \|m_s\|_* \) would introduce non-convexity back into the problem. Instead, we approach this with a \textit{penalty function} that gradually enforces the constraint.

Define the Lagrangian-like penalty function:
\[
\mathcal{L}(m_c, m_s, \lambda) = \|m_c\|_1 + \lambda \max(0, \|m_s\|_* - \|m_c\|_*)
\]

where:

- \( \lambda \geq 0 \) controls the strength of the constraint enforcement.

- The penalty term \( \max(0, \|m_s\|_* - \|m_c\|_*) \) becomes zero if \( \|m_c\|_* \geq \|m_s\|_* \) and adds a positive penalty otherwise.

This penalty formulation turns the original constrained problem into an unconstrained optimization problem, where the constraint \( \|m_c\|_* > \|m_s\|_* \) is gradually enforced by increasing \( \lambda \).

\textbf{Convergence and Feasibility.}
As \( \lambda \to \infty \), the penalty for violating \( \|m_c\|_* > \|m_s\|_* \) becomes very large, making it infeasible for any solution to have \( \|m_c\|_* \leq \|m_s\|_* \) in the limit. Therefore, the solution to the penalized Lagrangian problem approaches the solution to the original problem:
\[
\min \|m_c\|_1 \quad \text{subject to} \quad \|m_c\|_* > \|m_s\|_*
\]

In other words, by iteratively solving for \( m_c \) with larger values of \( \lambda \), we approximate a solution that satisfies the nuclear norm constraint. This penalty approach provides a feasible, convex approximation for the non-convex problem, yielding a solution that respects the desired rank constraint indirectly.

The penalty formulation of the Lagrangian:
\[
\mathcal{L}(m_c, m_s, \lambda) = \|m_c\|_1 + \lambda \max(0, \|m_s\|_* - \|m_c\|_*)
\]
provides an effective convex relaxation for the non-convex constraint \( \text{rank}(m_c) > \text{rank}(m_s) \). This approach ensures that the solution minimizes \( \|m_c\|_1 \) while approximately satisfying the rank constraint in a manner that is computationally feasible and does not require convexity of the difference \( \|m_s\|_* - \|m_c\|_* \).

\end{proof}

\begin{algorithm}
\caption{Merging LoRA with Cycle-Consistency Loss}
\SetKwInOut{Input}{Input}
\SetKwInOut{Output}{Output}
\Input{Content images $I_c$, Style images $I_s$}
\Output{Merged LoRA $L_m$}
\BlankLine
\textbf{Step 1: Train LoRA for content and style}\\
- Learn content LoRA ($L_c$) using content images $I_c$ and prompt \texttt{$p_c$=``a <V1> object in <S1> style''}\\
- Learn style LoRA ($L_s$) using style images $I_s$ and prompt \texttt{$p_s$=``a <V2> object in <S2> style''}
\BlankLine
\textbf{Step 2: Apply cycle-consistency loss across style} \\
\tcp*[f]{Generate variations of $I_c$} \\
- $I_{cc} \gets (D + L_c)(I_c, \text{``<V1> object in <S1> style''})$  \\
\tcp*[f]{Add style} \\
- $I_{cs} \gets (D + L_s)(I_{c}, \text{``<V1> object in <S2> style''})$ \\
\tcp*[f]{Remove style} \\
- $I_{csc} \gets (D + L_c)(I_{cs}, \text{``<V1> object in <S1> style''})$ \\
\tcp*[f]{Ensure cycle-consistency loss across style}\\
- $\mathcal{L_{\text{cycle\_sty}}} \gets \text{MSE}(I_{cc}, I_{csc})$ \\
\BlankLine
\textbf{Step 3: Apply cycle-consistency loss across object} \\
\tcp*[f]{Generate variations of $I_s$} \\
- $I_{ss} \gets (D + L_s)(I_s, \text{``<V2> object in <S2> style''})$ \\
\tcp*[f]{Add object} \\
- $I_{sc} \gets (D + L_c)(I_{s}, \text{``<V1> object in <S2> style''})$  \\
\tcp*[f]{Remove object} \\
- $I_{scs} \gets (D + L_s)(I_{sc}, \text{``<V2> object in <S2> style''})$ \\
\tcp*[f]{Ensure cycle-consistency loss across content}\\
- $\mathcal{L_{\text{cycle\_content}}} \gets \text{MSE}(I_{ss}, I_{scs})$ \\
\BlankLine
\textbf{Step 4: Merging LoRAs with consistency loss}\\
- Train merged LoRA $L_m$ using $L_c$ and $L_s$ with the consistency loss:
\begin{align*}
    \mathcal{L}_{constyle} &= ||(D + L_m)(I_c, p_c) - (D + L_c)(I_c, p_c)|| \\
    & + ||(D + L_m)(I_s, p_s) - (D + L_s)(I_s, p_s)|| \\
    &+ \lambda_{cycle} \cdot \mathcal{L}_{\text{cycle\_sty}} + \lambda_{cycle} \cdot \mathcal{L}_{\text{cycle\_content}}
\end{align*}
where $D$ is the T2I diffusion model and $\lambda_{cycle}$ is the scaling factor.
\BlankLine
\textbf{Step 5: Inference}\\
- During inference, pass the combined trained tokens as prompt (e.g., \texttt{``a <V1> object in <S2> style running''}) to the T2I diffusion model with merged LoRA $L_m$ to generate variations corresponding to the text prompt.
\label{alg:cycle_consistent_merging}
\end{algorithm}

\begin{algorithm}
\SetAlgoLined
\KwIn{Content merger $m_c$, Style merger $m_s$, content threshold ($T_{content}$), style threshold ($T_{style}$)}
\KwOut{Initialized content merger vector, Initialized style merger vector}

$V \gets \text{rand}(64, 1)$\;
$V' \gets \frac{V}{\|V\|}$ \tcp*{Normalize V}

\uIf{$Rank(m_c) > Rank(m_s)$}{
    $\text{content\_merger\_init} \gets 1(V'>T_{style})$\\
    $\text{style\_merger\_init} \gets 1(V'>T_{content})$
}
\uElseIf{$Rank(m_c) < Rank(m_s)$}{
    $\text{content\_merger\_init} \gets 1(V'>T_{content})$\\
    $\text{style\_merger\_init} \gets 1(V'>T_{style})$
}
\Else{
    $\text{content\_merger\_init} \gets \mathbf{1}_{64 \times 1}$\\
    $\text{style\_merger\_init} \gets \mathbf{1}_{64 \times 1}$
}

\caption{Content and Style Merger Initialization Algorithm}
\label{alg:initialization}
\end{algorithm}

\section{Algorithm details}

In this section, we provide additional details for our approach DuoLoRA. The algorithm for cycle-consistent merging using constyle loss has been provided in Algorithm.~\ref{alg:cycle_consistent_merging}. Also, the algorithm for content and style mergers initialization method is provided in Algorithm.~\ref{alg:initialization}.

\section{Details of user study}

Since the perceptual metrics are not always reliable, we conduct user study to verify the efficacy of our method.
We provide 20 examples of content, style pairs and corresponding generated images using Naive merging, B-LoRA, ZipLoRA and DuoLoRA. Then, we asked the following question to amazon mechanical turks: "which of the generated images is of best visual quality considering factors that we preserve both the content and style?", and the options are ``Naive merging'', ``B-LoRA'', ``ZipLoRA'', ``DuoLoRA'', ``None is satisfactory''. We evaluate this by 50 users, totalling 1000 questionnaires by Amazon Mechanical Turk (AMT) to get unbiased results.
The aggregate responses in Table.~\ref{table:user_study} showed that DuoLoRA generated images significantly outperformed the baselines by a large margin (50\%). This verify that DuoLoRA generated images retain both content and style.

\begin{table}[!h]
  \small
  \centering
  \vspace{-0.2cm}
  \caption{\small{User study}}
  \vspace{-0.2cm}
  \scalebox{0.95}{
  \begin{tabular}{lccccc}
  \toprule
  {\bf None} & {\bf Naive merging} & {\bf B-LoRA} & {\bf ZipLoRA} & {\bf DuoLoRA} \\ 
  \midrule
  {0.0\%} & {10\%} & {18\%} & {22\%} & {50\%} \\
  \bottomrule
  \end{tabular}}
  \vspace{-0.2cm}
  \label{table:user_study}
\end{table}

\section{Additional results and ablations}

\begin{table*}[h]
\centering
\caption{Performance comparison of content and style merging across different datasets and methods}
\vspace{-0.2cm}
\scalebox{0.85}{
\begin{tabular}{lccc|ccc|ccc|c}
\hline
\textbf{Method} & \multicolumn{3}{c|}{\textbf{Dreambooth + StyleDrop}} & \multicolumn{3}{c|}{\textbf{Subjectplop}} & \multicolumn{3}{c|}{\textbf{Subjectplop + StyleDrop}} & \textbf{\# Params} \\
 & \textbf{DINO} & \textbf{CLIP-I} & \textbf{CSD-s} & \textbf{DINO} & \textbf{CLIP-I} & \textbf{CSD-s} & \textbf{DINO} & \textbf{CLIP-I} & \textbf{CSD-s} & \\
\hline
Naïve Merging                 & 0.47 & 0.64 & 0.44 & 0.48 & 0.59 & 0.30 & 0.42 & 0.49 & 0.12 & -     \\
Joint Training                & 0.55 & 0.58 & 0.22 & 0.63    & 0.54    & 0.35    & 0.69    & 0.56    & 0.15    & 2.6B     \\
B-LoRA~\cite{frenkel2024implicit} (ECCV'24) & 0.45 & 0.57 & 0.28 & 0.64 & 0.57 & 0.32 & 0.63 & 0.56 & 0.14 & -     \\
ZipLoRA~\cite{shah2025ziplora} (ECCV'24)    & 0.53 & 0.55 & 0.41 & 0.75 & 0.62 & 0.35 & 0.87 & 0.56 & 0.16 & 1.33M \\
\rowcolor{gray!20} ZipRank                  & 0.53 & 0.64 & 0.42 & 0.71 & 0.62 & 0.35 & 0.86 & 0.56 & 0.17 & 0.07M \\
\rowcolor{gray!20} ZipRank + Layer-Priors & 0.54 & 0.67 & 0.45 & 0.73 & 0.63 & 0.37 & 0.90 & 0.56 & 0.18 & 0.07M \\
\rowcolor{gray!20} DuoLoRA & \textbf{0.56} & \textbf{0.69} & \textbf{0.48} & \textbf{0.78} & \textbf{0.65} & \textbf{0.40} & \textbf{0.90} & \textbf{0.58} & \textbf{0.20} & \textbf{0.07M} \\
\hline
\end{tabular}}
\label{tab:baseline_comparison}
\end{table*}

\subsection{Comparisons and Qualitative results}
We provide additional results using joint training baselines for all the datasets in Table.~\ref{tab:baseline_comparison}. Qualitative results in Fig.~\ref{fig:supple_qual_1}, Fig.~\ref{fig:supple_qual_2}, Fig.~\ref{fig:supple_qual_3}, Fig.~\ref{fig:supple_qual_dog2} and Fig.~\ref{fig:supple_qual_dog6} also shows that DuoLoRA performs better than the baselines. In Fig.~\ref{fig:qual_2}, we show ablations of how each components affect the merging. We also show results while using concepts from real-world concept-centric custom101 dataset~\cite{kumari2023multi} and styles from styledrop dataset. DuoLoRA outperforms baselines as shown in Fig.~\ref{fig:custom101_1}, Fig.~\ref{fig:custom101_2}, Fig.~\ref{fig:custom101_2}, Fig.~\ref{fig:custom101_3} and Fig.~\ref{fig:custom101_4}.
We also compare with Paircustomization~\cite{jones2024customizing} using their setup, since they require 1-shot concept-style pair for training. We use their dataset for fair comparison with 6 objects, 2 styles. DuoLoRA performs better than Paircustomization as shown in Fig.~\ref{fig:paircustom_compare} and also in main paper. Moreover, training DuoLoRA is more easier and parameter efficient than Paircustomization, which requires sort of joint training framework.

\subsection{Multi-concept stylization}
Here we provide details of multi-concept stylization.
We further extend our approach to handle multi-concept stylization. Given two concepts, \( C_1 \) and \( C_2 \), and a style \( S \), our objective is to generate an image that contains both \( C_1 \) and \( C_2 \) in style \( S \). To achieve this, we decompose the task into individual content-style merging processes, specifically \( C_1 \)-\( S \) and \( C_2 \)-\( S \) merging, using layer-prior-informed loss. We then perform an arithmetic merging of the outputs from \( C_1 \)-\( S \) and \( C_2 \)-\( S \) to create the final image. The steps are as follows:

\begin{itemize}
    \item We begin by merging each concept \( C_1 \) and \( C_2 \) with the style \( S \). First, we train LoRAs \( L_1 \), \( L_2 \), and \( L_s \) with identifiers \( <v1> \), \( <v2> \), and \( <s> \), respectively.
    \item Next, we merge LoRAs \( L_1 \) and \( L_2 \) with \( L_s \) in the rank dimension, applying the layer-prior loss as previously described. That is, we define the merged LoRAs as \( L_{1m} = \text{merge}(L_1, L_s) \) and \( L_{2m} = \text{merge}(L_2, L_s) \).
    \item After generating the merged LoRAs, we perform an arithmetic merging to obtain the final merged LoRA: \( L_{1,2,S} = \alpha_{1} L_{1m} + \alpha_{2} L_{2m} \).
    \item During inference, we use directional prompting with the merged LoRA \( L_{1,2,S} \), using \texttt{p = ``a $<v1>$ object on the left and a $<v2>$ object on the right in $<S>$ style''}. We find that directional prompting plays a crucial role in achieving high fidelity when generating multiple objects.
\end{itemize}

We extend this for 2, 3, 4 concepts from Dreambooth dataset and syles from StyleDrop dataset. The results are shown in Fig.~\ref{fig:2_concept} and Fig.~\ref{fig:3_concept}. We also show ablation for directional prompting in Fig.~\ref{fig:directional prompt}, which remains important for multi-concept composition.

\subsection{Recontextualization}

We also evaluate the recontextualization ability of our method. We use text prompts \texttt{'riding a boat', 'sleeping', 'riding a bicycle', 'riding a car', 'wearing a hat'} to generate different variation of styled concepts as shown in Fig.~\ref{fig:recontext_1} and Fig.~\ref{fig:recontext_2}. Our method can successfully recontextualize w.r.t the text prompts. 


\begin{figure*}
    \centering
        \centering
        \includegraphics[width=0.95\textwidth]{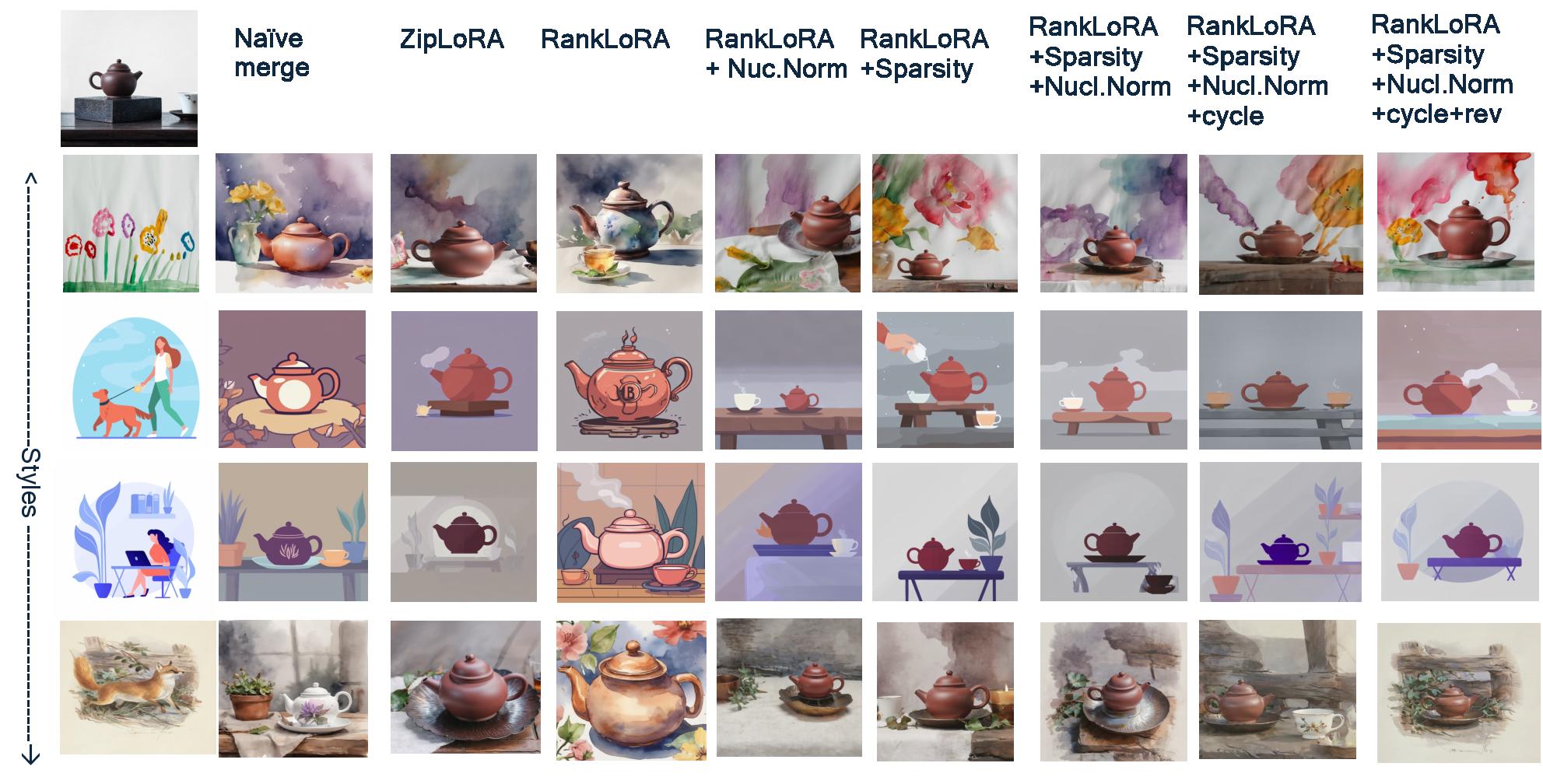} 
        \caption{Qualitative Results showing how each components impacting the merging}
        \label{fig:qual_2}   
\end{figure*}

\begin{figure*}
    \centering
        \centering
        \includegraphics[width=0.95\textwidth]{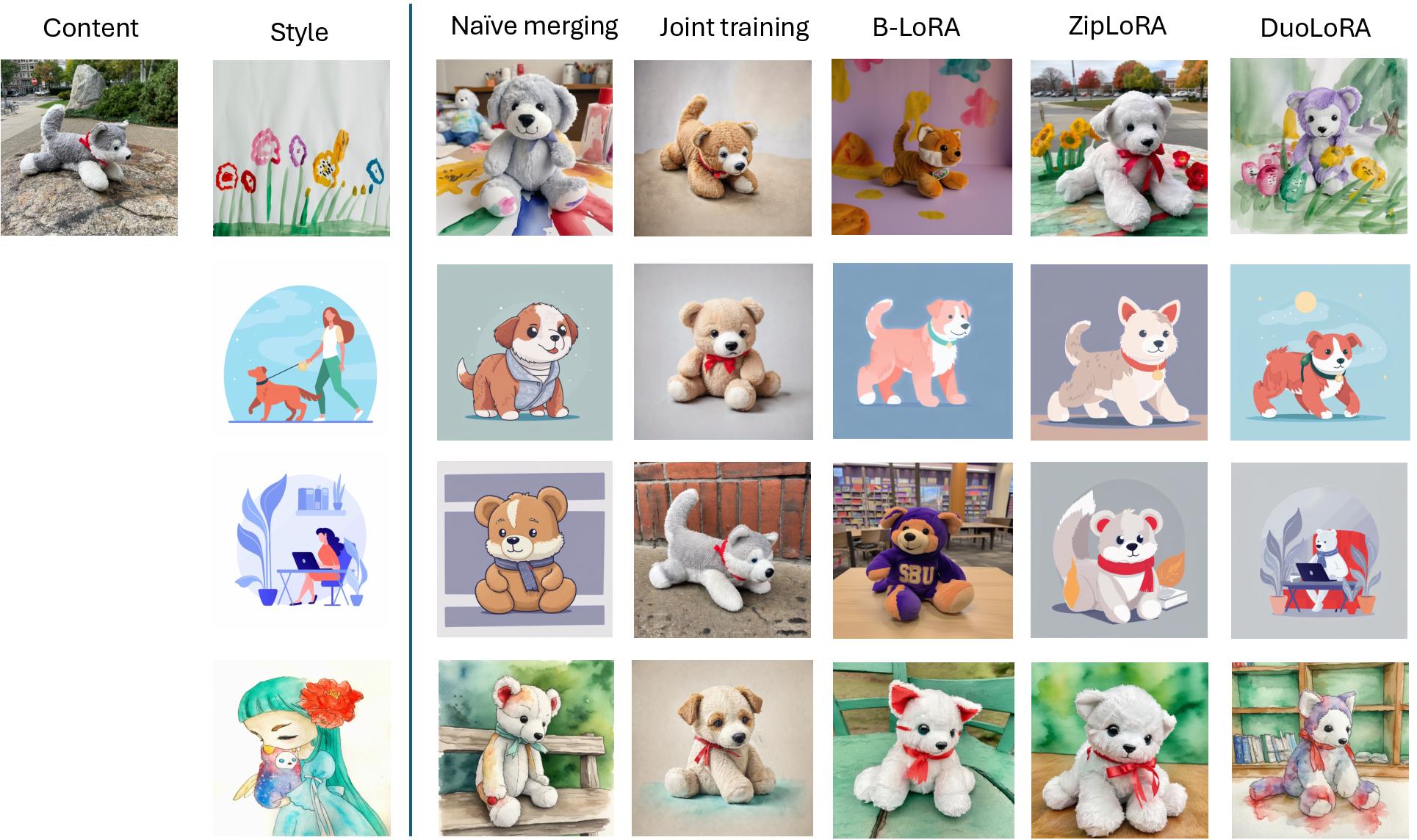} 
        \caption{Qualitative Results on Dreambooth + StyleDrop}
        \label{fig:supple_qual_1}   
\end{figure*}

\begin{figure*}
    \centering
        \centering
        \includegraphics[width=0.95\textwidth]{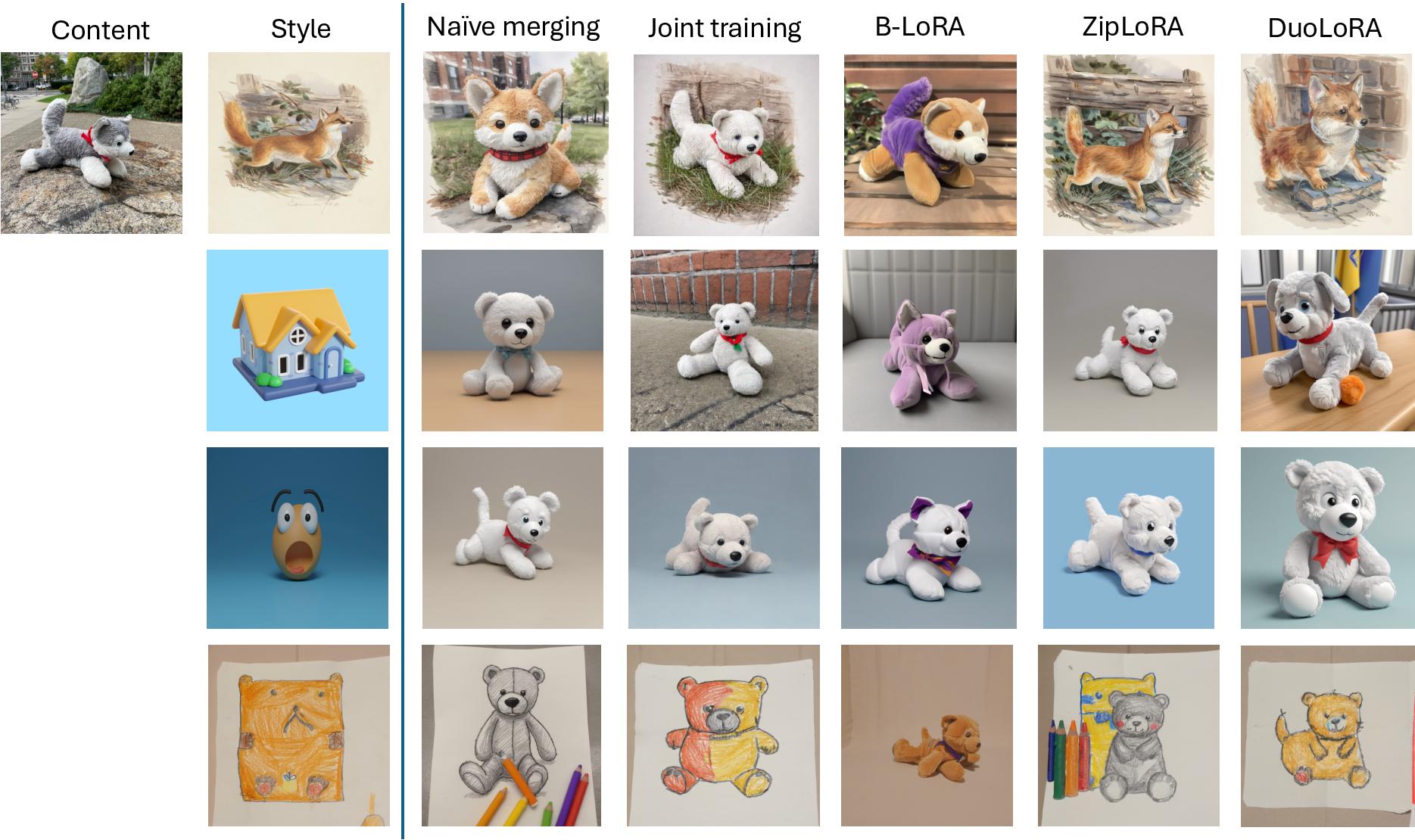} 
        \caption{Qualitative Results on Dreambooth + StyleDrop}
        \label{fig:supple_qual_2}   
\end{figure*}

\begin{figure*}
    \centering
        \centering
        \includegraphics[width=0.95\textwidth]{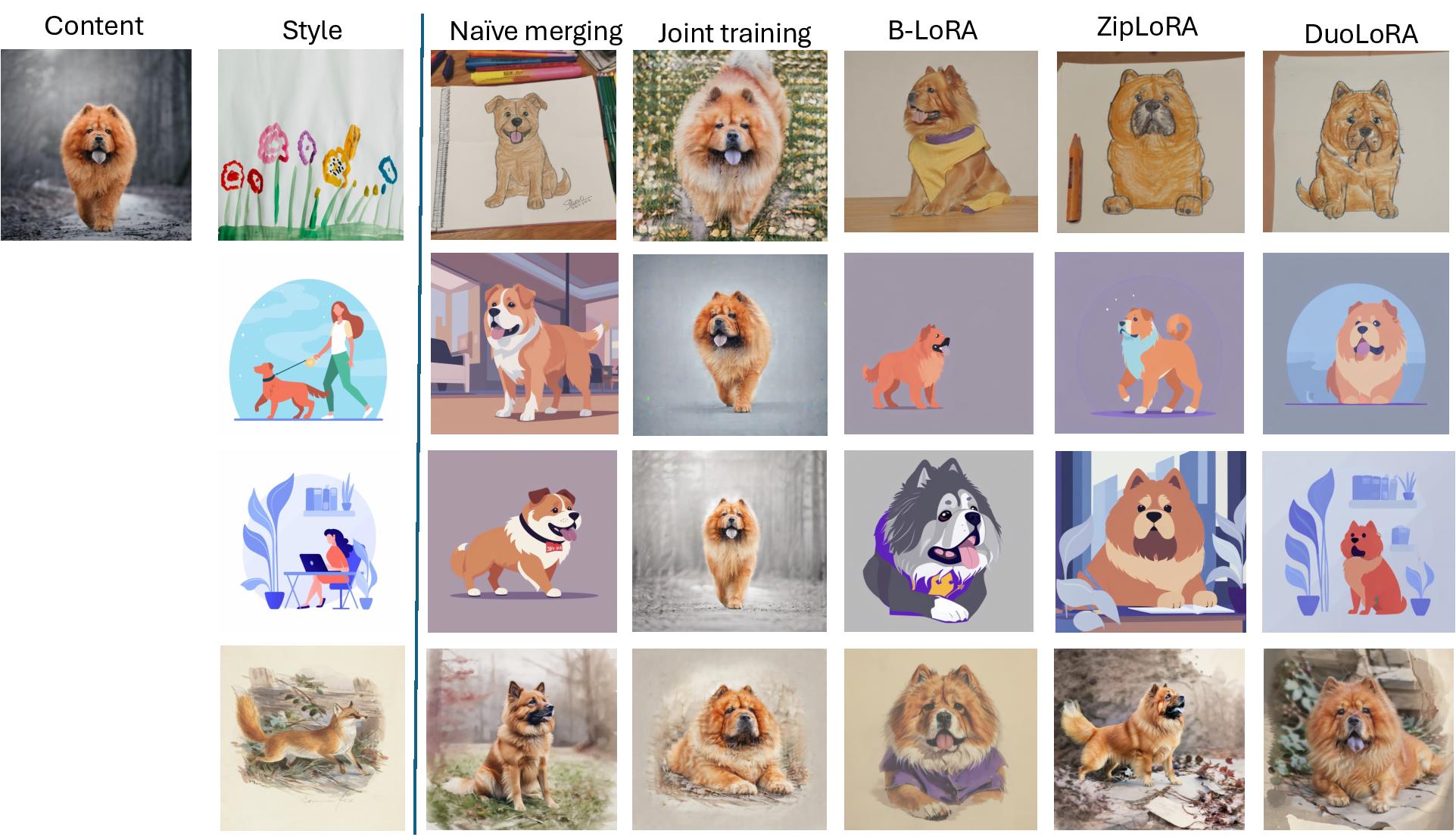} 
        \caption{Qualitative Results on Dreambooth + StyleDrop}
        \label{fig:supple_qual_dog2}   
\end{figure*}

\begin{figure*}
    \centering
        \centering
        \includegraphics[width=0.95\textwidth]{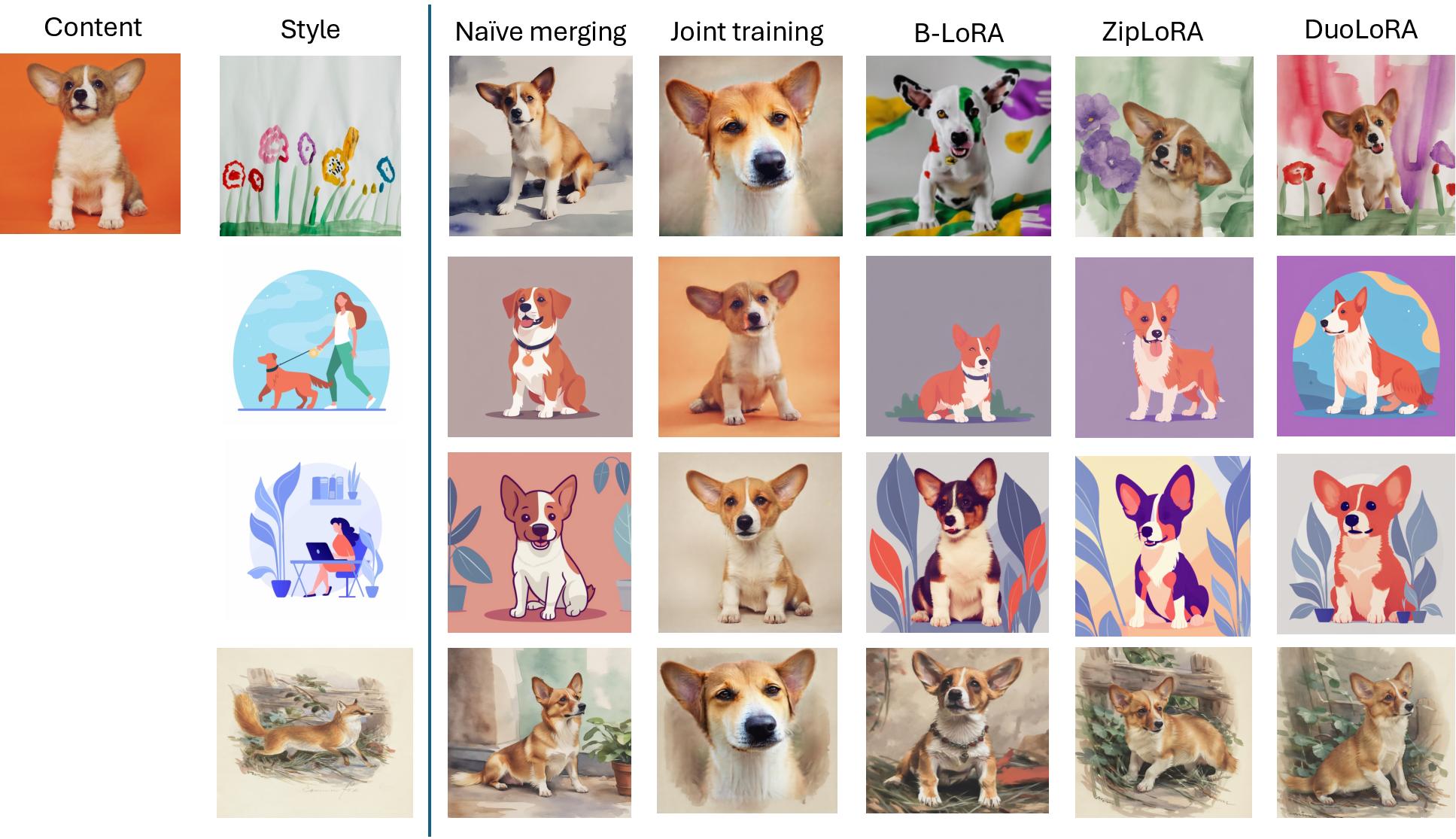} 
        \caption{Qualitative Results on Dreambooth + StyleDrop}
        \label{fig:supple_qual_dog6}   
\end{figure*}

\begin{figure*}
    \centering
        \centering
        \includegraphics[width=0.95\textwidth]{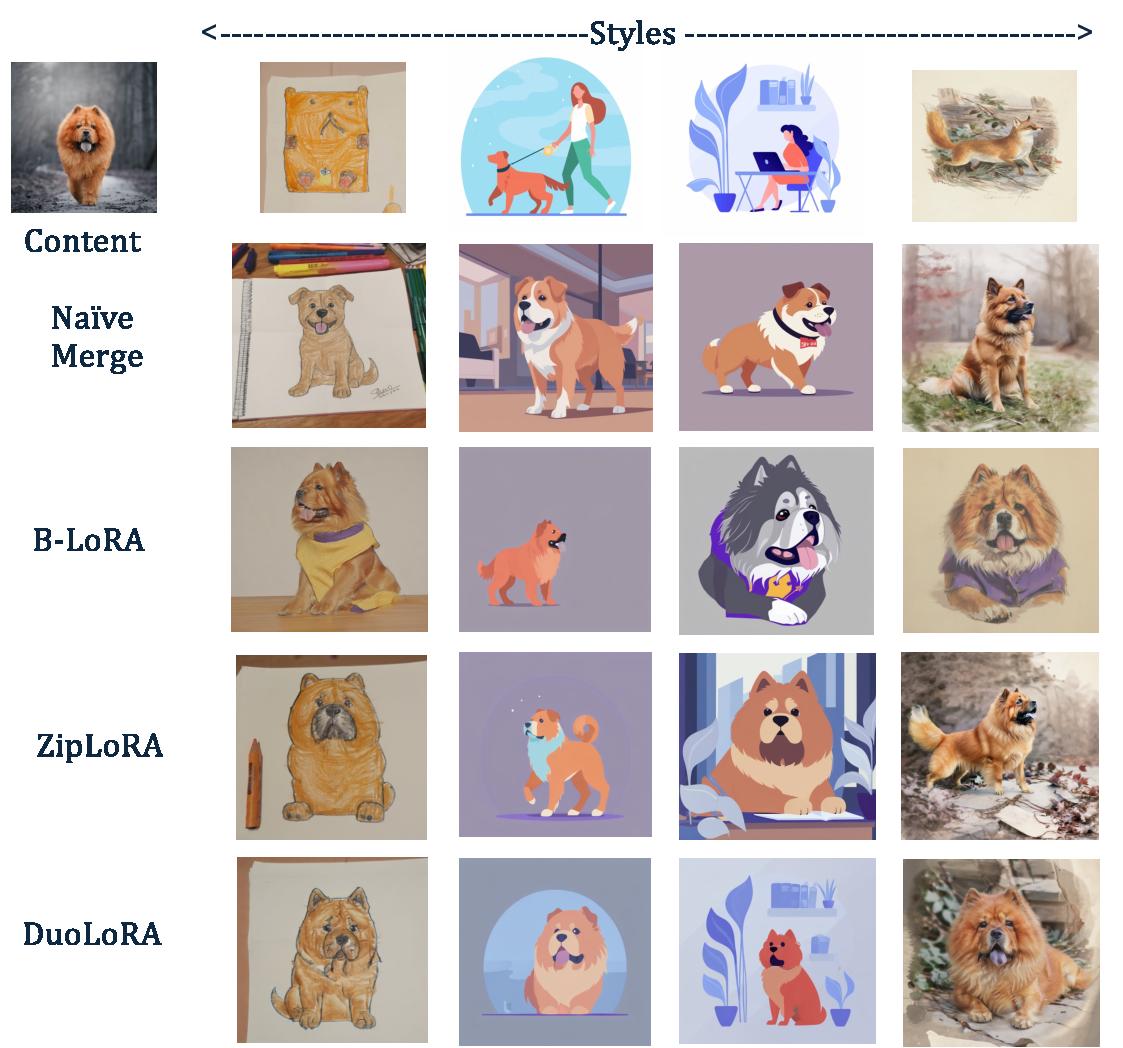}
        \caption{Qualitative Results on Dreambooth + StyleDrop.}
        \label{fig:qual_dog2}   
\end{figure*}

\begin{figure*}
    \centering
        \centering
        \includegraphics[width=1.0\textwidth]{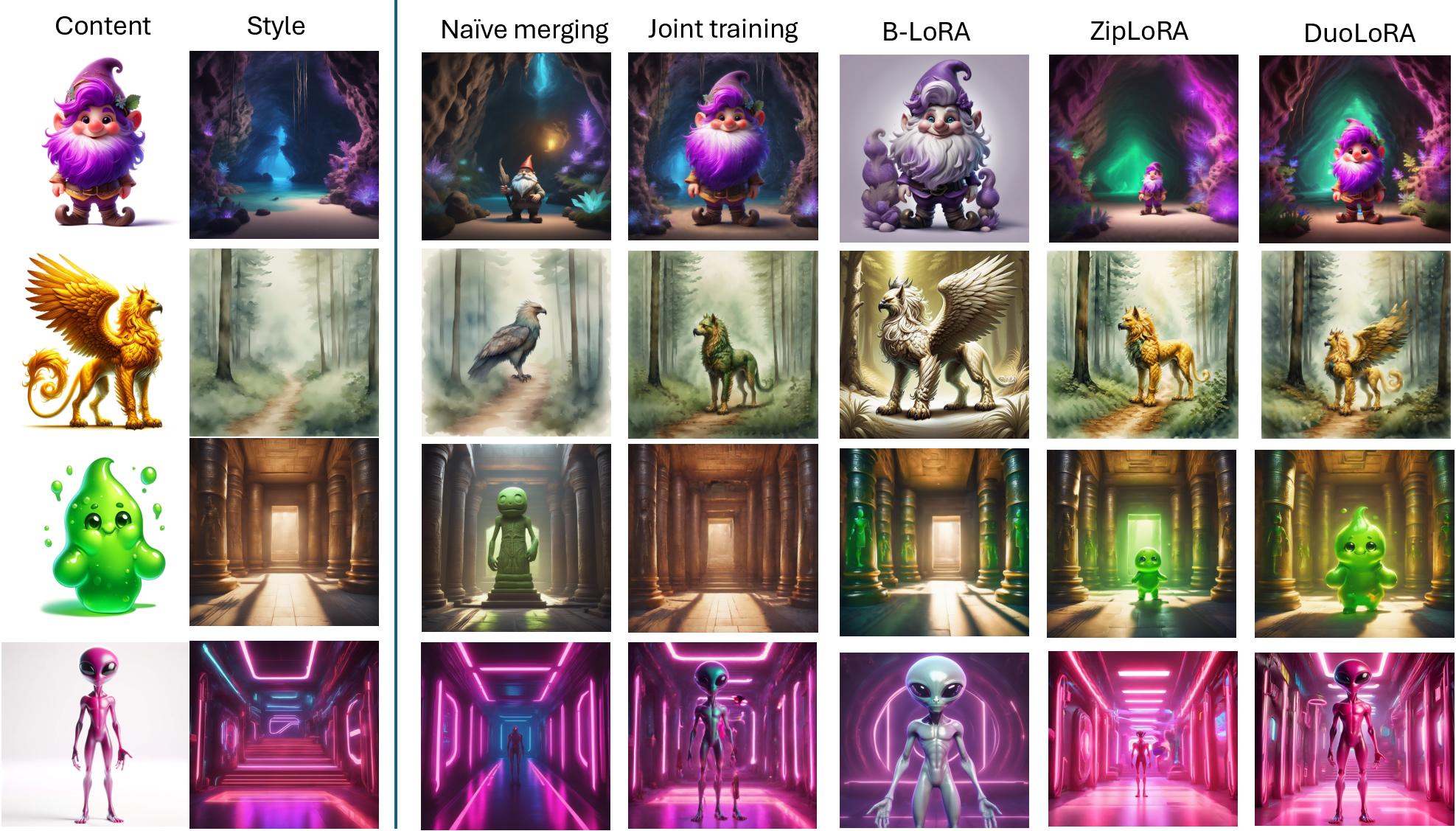} 
        \caption{Qualitative Results on Subjectplop}
        \label{fig:supple_qual_3}   
\end{figure*}

\begin{figure*}
    \centering
        \centering
        \includegraphics[width=0.95\textwidth]{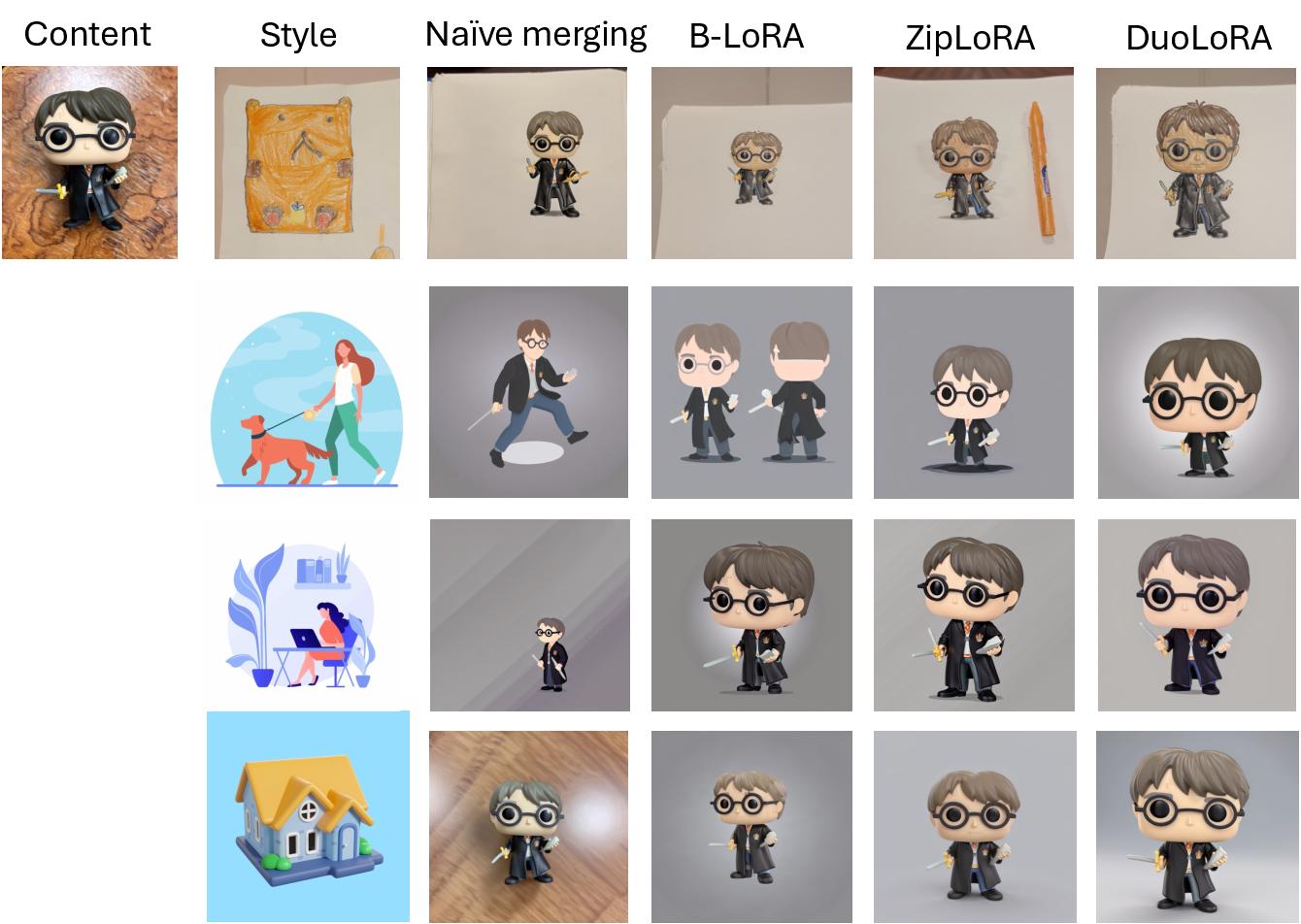} 
        \caption{Qualitative Results on Custom101}
        \label{fig:custom101_1}   
\end{figure*}

\begin{figure*}
    \centering
        \centering
        \includegraphics[width=0.95\textwidth]{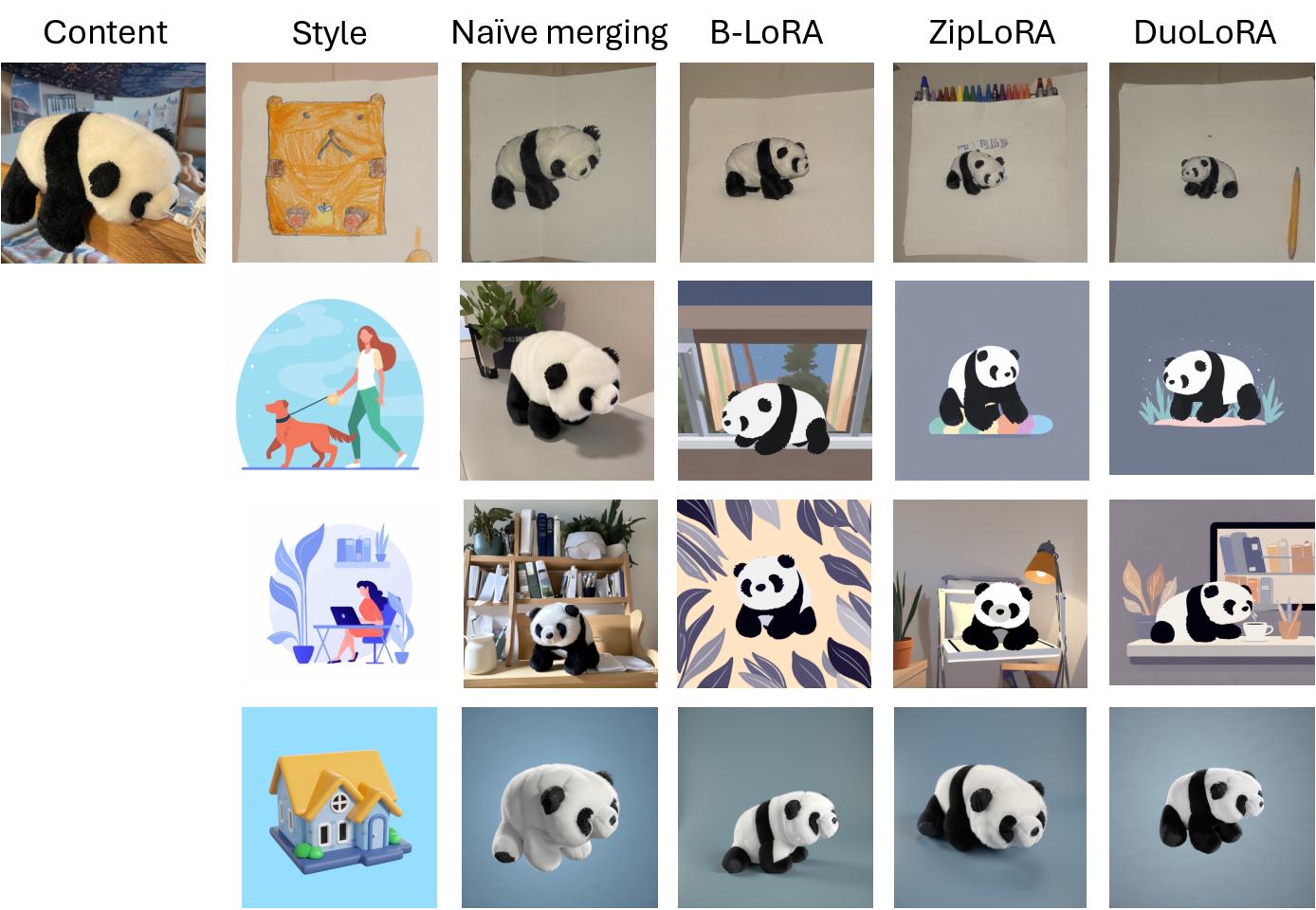} 
        \caption{Qualitative Results on Custom101}
        \label{fig:custom101_2}   
\end{figure*}

\begin{figure*}
    \centering
        \centering
        \includegraphics[width=0.95\textwidth]{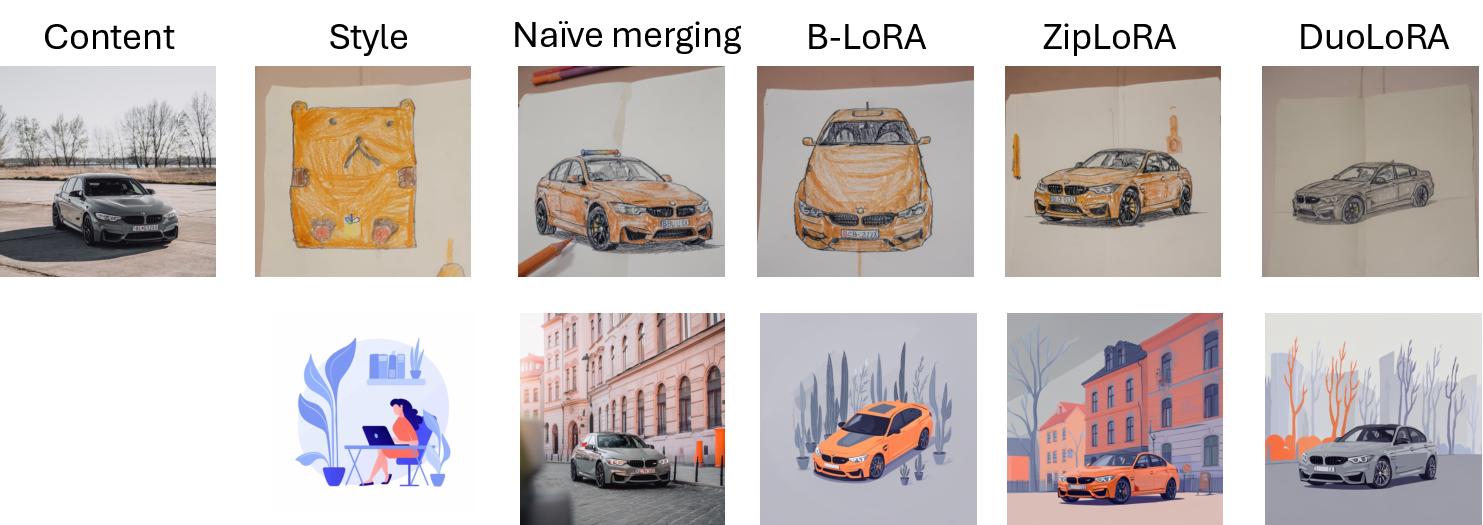} 
        \caption{Qualitative Results on Custom101}
        \label{fig:custom101_3}   
\end{figure*}

\begin{figure*}
    \centering
        \centering
        \includegraphics[width=0.95\textwidth]{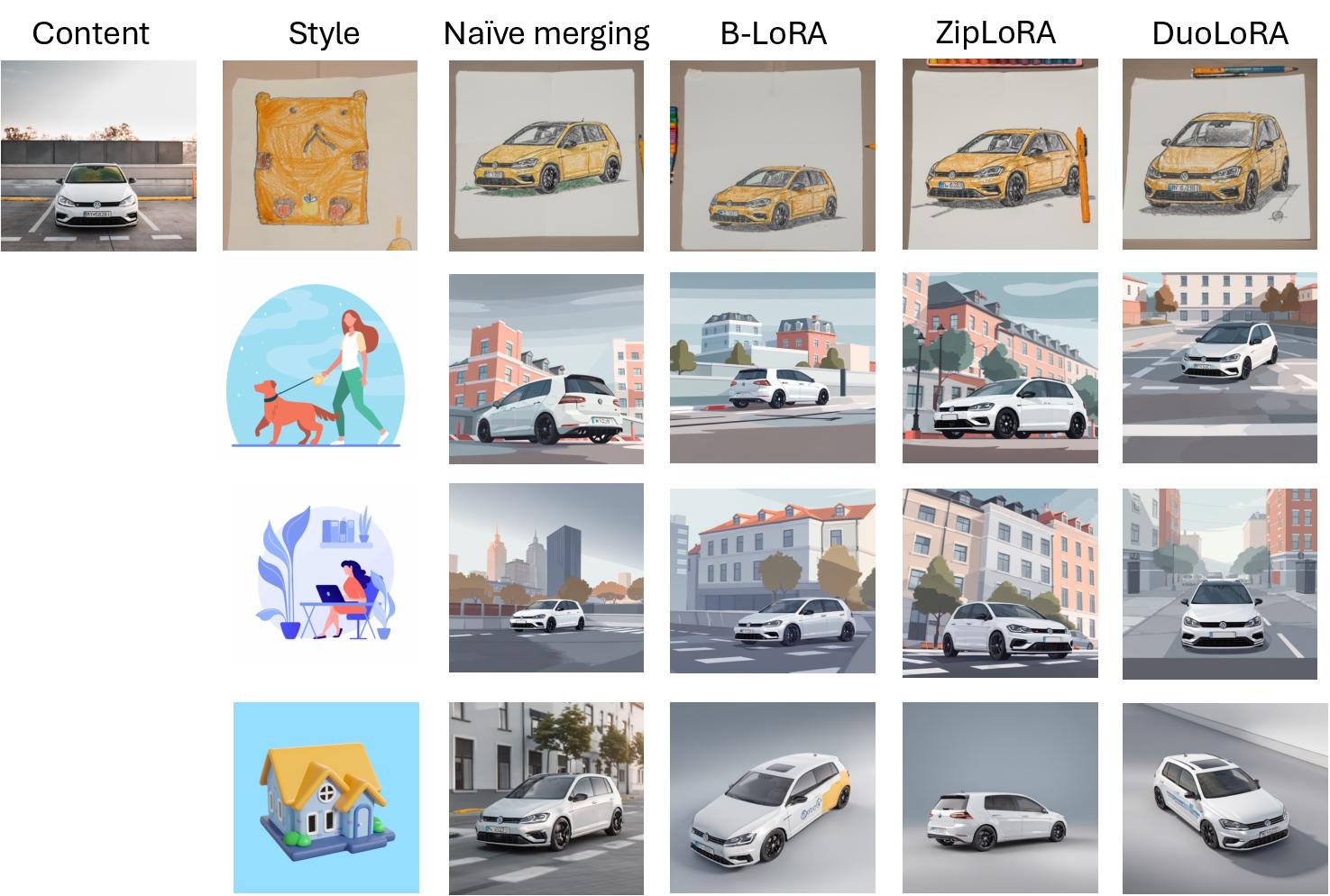} 
        \caption{Qualitative Results on Custom101}
        \label{fig:custom101_4}   
\end{figure*}

\begin{figure*}
    \centering
        \centering
        \includegraphics[width=0.95\textwidth]{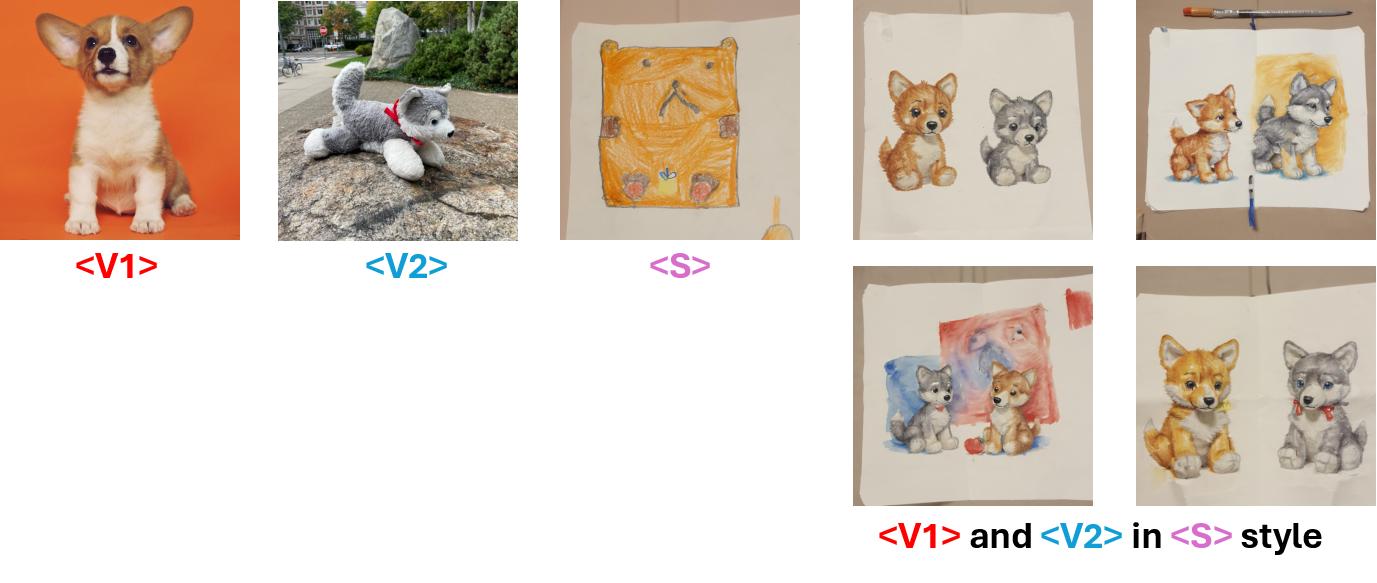} 
        \caption{Qualitative Results 2 concepts composition}
        \label{fig:2_concept}   
\end{figure*}

\begin{figure*}
    \centering
        \centering
        \includegraphics[width=0.95\textwidth]{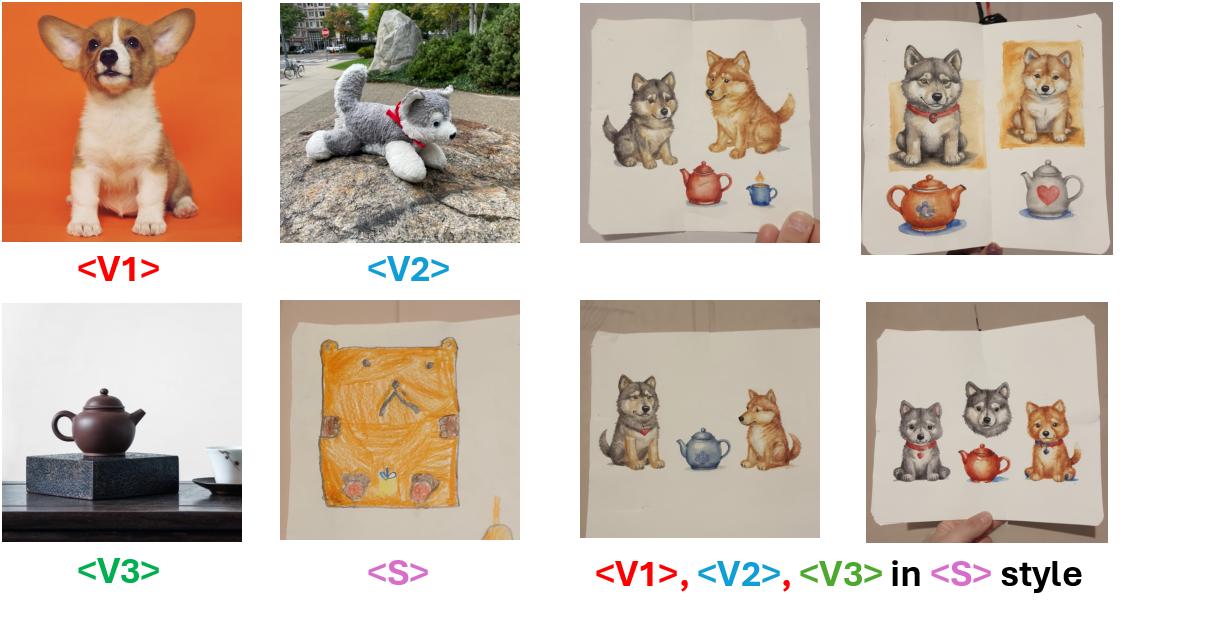} 
        \caption{Qualitative Results on 3 concepts composition}
        \label{fig:3_concept}   
\end{figure*}

\begin{figure*}
    \centering
        \centering
        \includegraphics[width=0.95\textwidth]{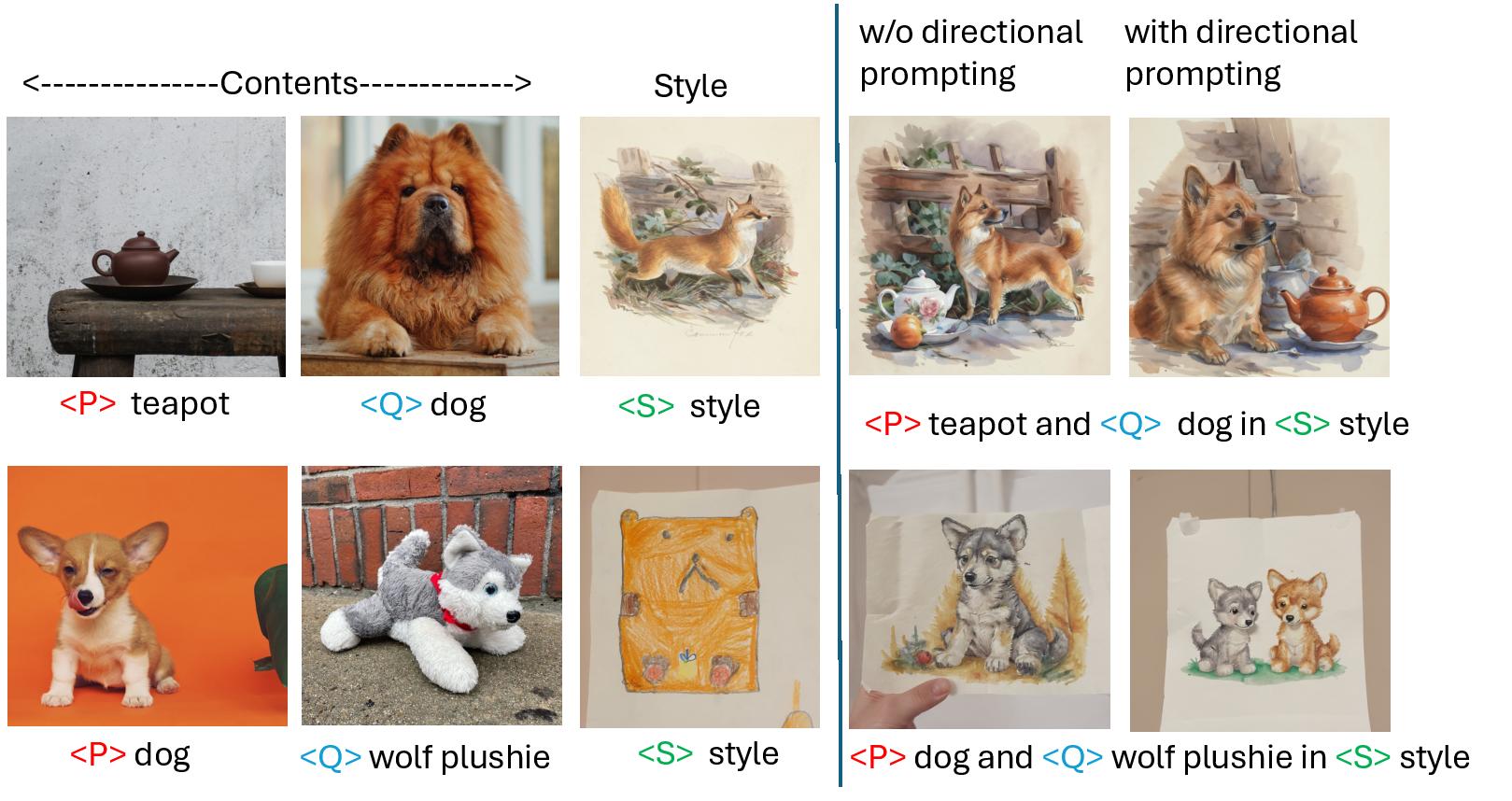} 
        \vspace{-0.2cm}
        \caption{Directional prompt ablation.}
        \vspace{-0.2cm}
        \label{fig:directional prompt}   
\end{figure*}

\begin{figure*}
    \centering
        \centering
        \includegraphics[width=1.0\textwidth]{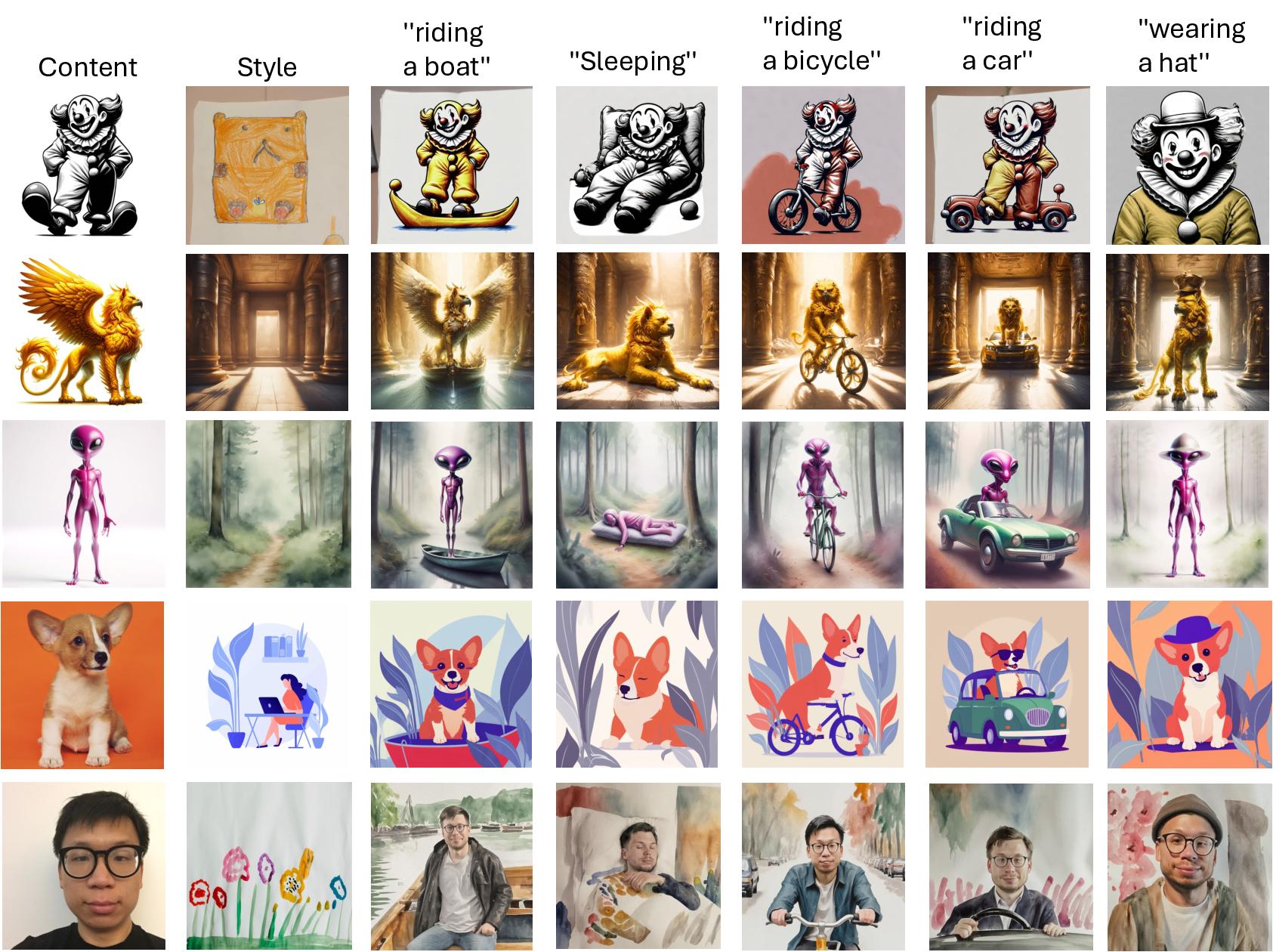} 
        \vspace{-0.2cm}
        \caption{Recontextualization examples}
        \vspace{-0.2cm}
        \label{fig:recontext_1}   
\end{figure*}

\begin{figure*}
    \centering
        \centering
        \includegraphics[width=1.0\textwidth]{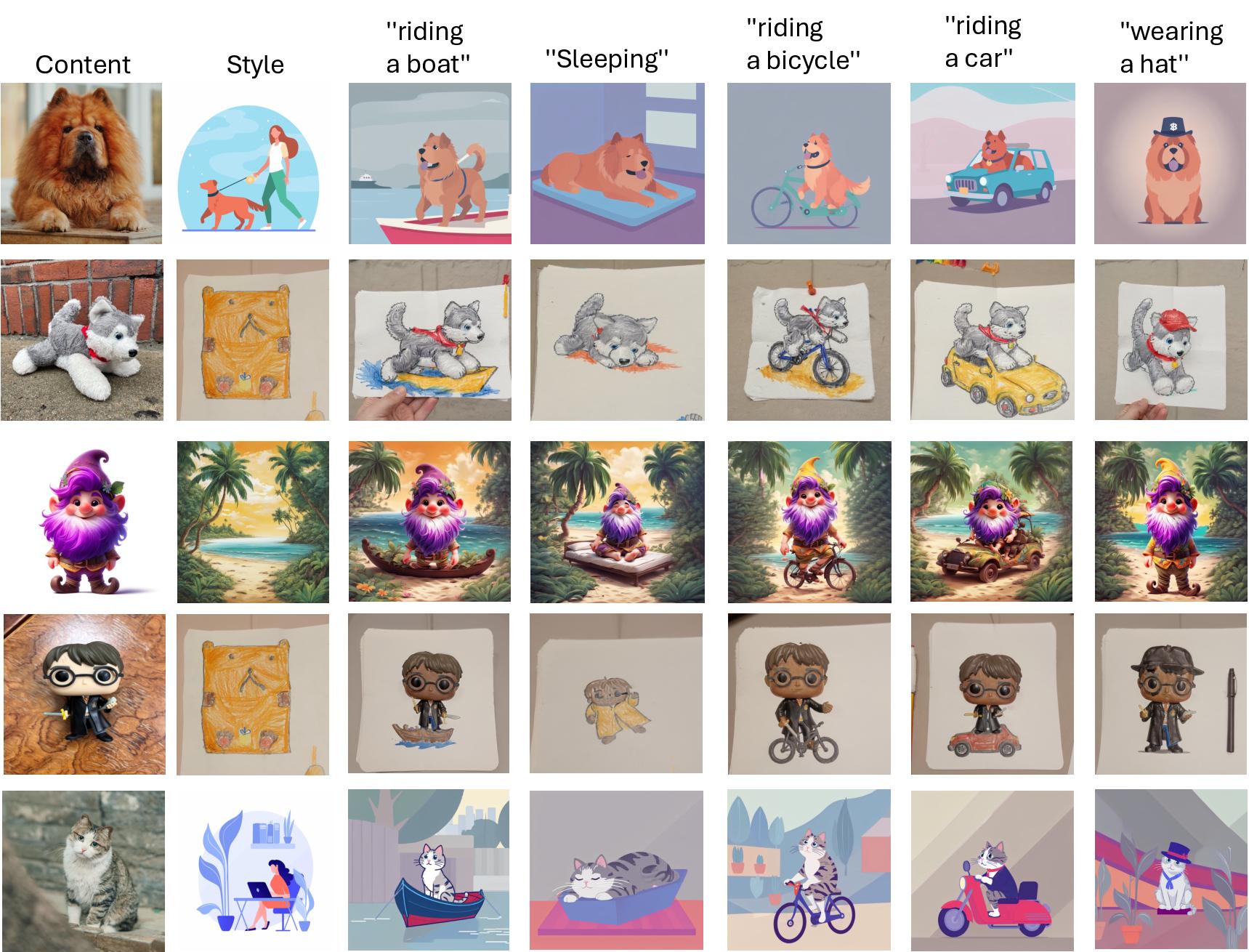} 
        \vspace{-0.2cm}
        \caption{Recontextualization examples}
        \vspace{-0.2cm}
        \label{fig:recontext_2}   
\end{figure*}

\begin{figure*}
    \centering
        \centering
        \includegraphics[width=0.75\textwidth]{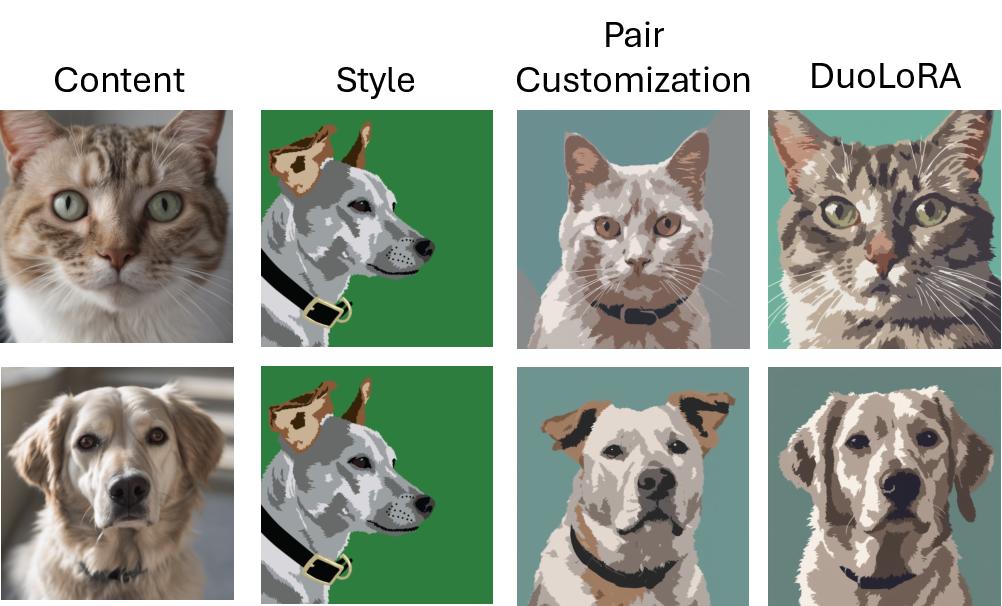} 
        \vspace{-0.2cm}
        \caption{Comparison with Paircustomization}
        \vspace{-0.2cm}
        \label{fig:paircustom_compare}   
\end{figure*}
\end{document}